\newcommand{\A}{{\mathcal A}}
\newcommand{\B}{{\mathcal B}}
\newcommand{\V}{{\mathcal V}}
\newcommand{\NLI}{{\mathcal N}}
\newcommand{\nP}{{\mathfrak n}}
\newcommand{\w}{\mathrm{w}}
\newcommand{\N}{\mathrm{N}}
\newcommand{\dd}{\mathrm{d}}
\newcommand{\op}{{\sf p}} 
\newcommand{\ef}{\underline{f}}
\newcommand{\eg}{\underline{g}}
\newcommand{\dist}{\mathrm{d}}
\begin{document}

\Logo{}

\begin{frontmatter}

\title{Nonlinearity of Boolean functions: an algorithmic approach based on
multivariate polynomials.}
\runtitle{Nonlinearity of Boolean functions and multivariate polynomials.}

{\author{Emanuele Bellini}} 
{\tt{(eemanuele.bellini@gmail.com)}}\\
{Department of Mathematics, University of Trento, Italy.}

{\author{Ilaria Simonetti}}
{{\tt (ilaria.simonetti@gmail.com)}}\\
{{Department of Mathematics, University of Milan, Italy.}}

{\author{Massimiliano Sala}} 
{\tt{(maxsalacodes@gmail.com)}}\\
{Department of Mathematics, University of Trento, Italy.}

\runauthor{E.~Bellini, I.~Simonetti, M.~Sala}

\begin{abstract}
We compute the nonlinearity of Boolean functions with Gröbner basis techniques, providing two algorithms: one over the binary field and the other over the rationals. We also estimate their complexity. Then we show how to improve our rational algorithm, arriving at a worst-case complexity of  $O(n2^n)$ operations over the integers, that is, sums and doublings. This way, with a different approach, we reach the same complexity of established algorithms, such as those based on the fast Walsh transform.
\end{abstract}

\begin{keyword}
 Boolean functions, \Gr\ basis, nonlinearity
\end{keyword}

\end{frontmatter}

\section{Introduction}
  \label{secIntro}
Any function from $(\FF_2)^n$ to $\FF_2$ is called a Boolean function. Boolean functions are important in symmetric cryptography, since they are used in the confusion layer of ciphers. An affine Boolean function does not provide an effective confusion. To overcome this, we need functions which are as far as possible from being an affine function. The effectiveness of these functions is measured by several parameters, one of these is called ``nonlinearity'' (\cite{CGC-cd-book-carlet}).\\
In this paper, we provide three methods to compute the nonlinearity of Boolean functions. 
Moreover, we give an estimate of the complexity of our methods, comparing it with the complexity of the classical method which uses the fast Walsh transform and the fast M\"obius transform.
\\
In Sections \ref{secPrelOnBF} and \ref{secPrelOnGB} we recall the basic notions and statements, especially regarding Boolean functions, which are necessary for our methods.\\
In Section \ref{secNLwithGBoverF2} and \ref{secNLwithGBoverQ} we provide two algorithms which reduce the problem of computing the nonlinearity of a Boolean function to that of solving a \Gr\ basis. In particular, in Section \ref{secNLwithGBoverQ} we associate to each Boolean function in $n$ variables a polynomial whose evaluations represent the distance from all possible affine functions.\\
In Section \ref{secNLwithFPE} we show that this polynomial can be used to find the nonlinearity of a Boolean function without passing through a \Gr\ basis computation. In Section \ref{secNLPolProp} we provide some results to express the coefficients of this polynomials, and we show in Section \ref{secNLPolComplex} that these can be computed also using fast transforms.\\
Finally, in Section \ref{secNLComplexity} we analyze the complexity of the proposed methods, both experimentally and theoretically. 
In particular, we show that using fast Fourier methods we arrive at a worst-case complexity of  $O(n2^n)$ operations over the integers, that is, sums and doublings. This way, with a different approach, we reach the same complexity of established algorithms, such as those based on the fast Walsh transform.


\section{Preliminaries and Notation on Boolean functions}
  \label{secPrelOnBF}

In this chapter we summarize some definitions and known results from \cite{CGC-cd-book-carlet} and \cite{CGC-cd-book-macwilliamsI}, concerning Boolean functions and the classical techniques to determine their nonlinearity.\\

We denote by $\FF$ the field $\FF_2$. The set $\FF^n$ is the set of all binary vectors of length $n$, viewed as an $\FF$-vector space.\\
Let $v\in\FF^n$. The \emph{Hamming weight} $\w(v)$ of the vector $v$ is the number of its nonzero coordinates. For any two vectors $v_1,v_2\in\FF^n$, the \emph{Hamming distance} between $v_1$ and $v_2$, denoted by $\dist(v_1,v_2)$, is the number of coordinates in which the two vectors differ.\\
A \emph{Boolean function} is a function $f:\FF^n\rightarrow \FF$. The set of all Boolean functions from $\FF^n$ to $\FF$ will be denoted by $\B_n$.


\subsection{Representations of Boolean functions}
%
\subsubsection{Evaluation vector}
We assume implicitly to have ordered $\FF^n$, so that $\FF^n=\{\op_1,\ldots,\op_{2^n}\}$.\\ 
A Boolean function $f$ can be specified by a \emph{truth table}, which gives the evaluation of $f$ at all $\op_i$'s.
%
\begin{definition}
We consider the evaluation map:
$$
\B_n \longrightarrow \FF^{2^n} 
\qquad
f \longmapsto \underline{f}=(f(\op_1),\ldots,f(\op_{2^n}))\,.
$$
The vector $\underline{f}$ is called the \emph{evaluation vector} of $f$.
\end{definition}
%
%
Once the order on $\FF^n$ is chosen, i.e. the $\op_i$'s are fixed, it is clear that the evaluation vector of $f$ uniquely identifies $f$.
\subsubsection{Algebraic normal form}
A Boolean function $f\in\B_n$ can be expressed in a unique way as a square free polynomial in $\FF[X]=\FF[x_1,\ldots,x_n]$, i.e.
$$f=\sum_{v \in \FF^n}b_vX^v\,,$$
where $X^v=x^{v_1}\cdots x^{v_n}$.\\
This representation is called the \emph{Algebraic Normal Form} (ANF).\\
%
\begin{definition}
The degree of the ANF of a Boolean function $f$ is called the \emph{algebraic degree} of f, denoted by $\deg f$, and it is equal to 
$\max\{\w(v) \mid v \in \FF^n, b_v \ne 0 \}$.
\end{definition}
Let $\A_n$ be the set of all affine functions from $\FF^n$ to $\FF$, i.e. the set of all Boolean functions in $\B_n$ with algebraic degree 0 or 1. If $\alpha\in\A_n$ then its ANF can be written as
$$\alpha(X)=a_0 + \sum_{i=1}^na_ix_i\,.$$
%
%
%
\indent
There exists a simple divide-and-conquer butterfly algorithm (\cite{CGC-cd-book-carlet}, p.10) to compute the ANF from the truth-table (or vice-versa) of a Boolean function, which requires $O(n2^{n})$ bit sums, while $O(2^n)$ bits must be stored. This algorithm is known as the \emph{fast M\"obius transform}.
\subsubsection{Numerical normal form}
%
In \cite{CGC-cry-art-carlet1999} a useful representation of Boolean functions for characterizing several cryptographic criteria (see also \cite{CGC-cry-art-carlet2001bent}, \cite{CGC-cry-carlet2002coset}) is introduced.\\
Boolean functions can be represented as elements of $\KK[X]/\langle X^2-X \rangle$, where $\langle X^2-X \rangle$ is the ideal generated by the polynomials $x_1^2-x_1,\ldots,x_n^2-x_n$, and $\KK$ is $\ZZ$, $\QQ$, $\RR$, or $\CC$.
\begin{definition}\label{defNNF}
 Let $f$ be a function on $\FF^n$ taking values in a field $\KK$. We call the \emph{numerical normal form (NNF)} of $f$ the following expression of $f$ as a polynomial:
 $$
 f(x_1,\ldots,x_n) = \sum_{u \in \FF^n}\lambda_u (\prod_{i=1}^{n}x_i^{u_i}) = \sum_{u \in \FF^n}\lambda_{u}X^u\,,
 $$
 with $\lambda_{u} \in \KK$ and $u=(u_1,\ldots,u_n)$.
\end{definition}
It can be proved 
that any Boolean function $f$ admits a unique numerical normal form.
As for the ANF, it is possible to compute the NNF of a Boolean function from its truth table by mean of an algorithm similar to a fast Fourier transform, thus requiring $O(n2^n)$ additions over $\KK$ and storing $O(2^n)$ elements of $\KK$.\\
\indent
From now on let $\KK = \QQ$.\\
The truth table of $f$ can be recovered from its NNF by the formula $$f(u)=\sum_{a\preceq u}\lambda_a,\forall u \in \FF^n\,,$$
where $a\preceq u\iff \forall i \in \{1,\ldots,n\} \; a_i \le u_i$. Conversely, 
it is possible to derive an explicit formula for the coefficients of the NNF by means of the truth table of $f$.
\begin{proposition}\label{propNNFcoeff}
 Let $f$ be any integer-valued function on $\FF^n$. For every $u\in \FF^n$, the coefficient $\lambda_u$ of the monomial $X^u$ in the NNF of $f$ is:
 \begin{equation}\label{eqNNFCoeff}
  \lambda_u = (-1)^{\w(u)}\sum_{a\in \FF^n |a\preceq u}(-1)^{\w(a)}f(a)\,.
 \end{equation}
\end{proposition}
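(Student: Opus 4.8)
The plan is to treat this as a M\"obius-inversion problem on the Boolean lattice $(\FF^n,\preceq)$ and to verify the formula directly against the relation already recorded in the excerpt, namely $f(u)=\sum_{a\preceq u}\lambda_a$. Since the NNF is unique, the coefficients $\lambda_u$ are the unique solution of this (triangular, unit-diagonal) linear system, so it suffices to substitute the expansion of $f$ in terms of the $\lambda$'s into the right-hand side of \eqref{eqNNFCoeff} and check that we recover $\lambda_u$.

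First I would insert $f(a)=\sum_{b\preceq a}\lambda_b$ into the right-hand side of \eqref{eqNNFCoeff}, obtaining a double sum, and then exchange the (finite) order of summation to collect the coefficient of each $\lambda_b$:
\[
(-1)^{\w(u)}\sum_{a\preceq u}(-1)^{\w(a)}f(a)=(-1)^{\w(u)}\sum_{b\preceq u}\lambda_b\!\!\sum_{b\preceq a\preceq u}\!\!(-1)^{\w(a)}.
\]
The crux is then to evaluate the inner sum $\sum_{b\preceq a\preceq u}(-1)^{\w(a)}$ for a fixed $b\preceq u$. I would describe the interval $[b,u]$ explicitly: on each position with $b_i=u_i$ the entry $a_i$ is forced, while on the $k:=\w(u)-\w(b)$ positions where $b_i=0$ and $u_i=1$ the entry $a_i$ is free. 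Parametrizing $a$ by the number $j$ of these free positions set to $1$ gives $\w(a)=\w(b)+j$, so the inner sum factors as $(-1)^{\w(b)}\sum_{j=0}^{k}\binom{k}{j}(-1)^j=(-1)^{\w(b)}(1-1)^{k}$.

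This vanishes unless $k=0$, i.e. unless $b=u$, in which case it equals $(-1)^{\w(u)}$. Substituting back, only the term $b=u$ survives, yielding $(-1)^{\w(u)}\cdot(-1)^{\w(u)}\lambda_u=\lambda_u$, as claimed. The only steps to treat with care are the interchange of summation (harmless here, as all sums are finite) and the identification of the interval $[b,u]$ that makes the alternating binomial sum $\sum_{j}\binom{k}{j}(-1)^j=(1-1)^{k}$ appear; this binomial collapse is the heart of the argument, and is nothing other than the Boolean-lattice M\"obius function $\mu(b,u)=(-1)^{\w(u)-\w(b)}$ in disguise, with $(-1)^{-\w(a)}=(-1)^{\w(a)}$ accounting for the final sign.
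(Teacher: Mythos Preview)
Your argument is correct: substituting the forward relation $f(a)=\sum_{b\preceq a}\lambda_b$ into the right-hand side of \eqref{eqNNFCoeff}, swapping the finite sums, and collapsing the alternating binomial sum over the interval $[b,u]$ is exactly the M\"obius inversion on $(\FF^n,\preceq)$ that recovers $\lambda_u$. The computation of the inner sum via $\w(a)=\w(b)+j$ and $(1-1)^k$ is clean and valid.

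As for comparison with the paper: there is nothing to compare against. The paper states Proposition~\ref{propNNFcoeff} without proof, treating it as a known fact imported from the literature on the numerical normal form (it is attributed to \cite{CGC-cry-art-carlet1999}). Your write-up therefore supplies a proof where the paper gives none; it is the standard M\"obius-inversion derivation one finds in the cited source.
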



\subsection{Nonlinearity of a Boolean function}

%
\begin{definition}
Let $f,g\in\B_n$. The distance $\dist(f,g)$ between $f$ and $g$ is the number of $v\in\FF^n$ such that $f(v)\neq g(v)$.
\end{definition}
The following lemma is obvious:
\begin{lemma}\label{distance}
Let $f,g$ be two Boolean functions. Then
$$\dist(f,g)=\dist(\ef,\eg)=\w(\ef+\eg)\,.$$
\end{lemma}
\begin{definition}
Let $f\in\B_n$. The \emph{nonlinearity} of $f$ is the minimum of the distances between $f$ and any affine function
$$\N(f)=\min_{\alpha\in\A_n}\dist(f,\alpha)\,.$$
\end{definition}
%
The maximum nonlinearity for a Boolean function $f$ is bounded by:
\begin{align}\label{eqMaxNL}
\max\{\N(f) \mid f\in\B_n\} \le 2^{n-1}-2^{\frac{n}{2}-1}\,.
\end{align}

\subsection{Walsh transform of a Boolean function}

\begin{definition}
The \emph{Walsh transform} of a Boolean function $f\in\B_n$ is the following function:
$$
\hat{F}: \FF^n \longrightarrow \mathbb{Z} 
\qquad
x \longmapsto \sum_{y\in\FF^n}(-1)^{x\cdot y + f(y)}\,.
$$
where $x\cdot y$ is the scalar product of $x$ and $y$.
\end{definition}
We have the following fact:
\begin{fact}
 $$\N(f)=\min_{v\in\FF^n}\{2^{n-1}-\frac{1}{2}\hat{F}(v)\}=2^{n-1}-\frac{1}{2}\max_{v\in\FF^n}\{\hat{F}(v)\}$$
\end{fact}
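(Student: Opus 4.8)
The plan is to convert the Hamming distance between $f$ and each candidate affine function into a Walsh coefficient, and then to minimize. The second equality in the statement is immediate, since pulling the sign out of the minimum gives $\min_{v}\{2^{n-1}-\frac12\hat F(v)\} = 2^{n-1}-\frac12\max_{v}\hat F(v)$; so the real content is the first equality.

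First I would record the basic sign identity for a single linear form $\ell_v(y)=v\cdot y$. For fixed $v$, the summand $(-1)^{v\cdot y+f(y)}$ equals $+1$ exactly when $f(y)=v\cdot y$ and $-1$ otherwise; counting the two kinds of terms and using Lemma~\ref{distance} gives
\begin{equation*}
\hat F(v)=\bigl(2^{n}-\dist(f,\ell_v)\bigr)-\dist(f,\ell_v)=2^{n}-2\,\dist(f,\ell_v),
\end{equation*}
hence the key identity $\dist(f,\ell_v)=2^{n-1}-\frac12\hat F(v)$. Every element of $\A_n$ is either such an $\ell_v$ or its complement $\ell_v+1$, according as its constant term $a_0$ is $0$ or $1$; and since $\ell_v+1$ disagrees with $f$ exactly where $\ell_v$ agrees, $\dist(f,\ell_v+1)=2^{n}-\dist(f,\ell_v)=2^{n-1}+\frac12\hat F(v)$.

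Finally I would minimize over all of $\A_n$. For each $v$ the better choice of $a_0$ contributes $\min\{2^{n-1}-\frac12\hat F(v),\,2^{n-1}+\frac12\hat F(v)\}=2^{n-1}-\frac12|\hat F(v)|$, so minimizing over $v$ selects the coefficient of largest absolute value and yields $\N(f)=2^{n-1}-\frac12\max_{v}|\hat F(v)|$. The step I expect to require the most care is precisely this bookkeeping of the constant term $a_0$: it is the complement $\ell_v+1$ that forces a two-sided estimate, and this is exactly where the maximum of $\hat F(v)$ in the statement must be read as a maximum of $|\hat F(v)|$ in order to recover the distance to \emph{affine} (not merely linear) functions. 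By contrast, the sign count giving $\dist(f,\ell_v)=2^{n-1}-\frac12\hat F(v)$ is routine once agreements and disagreements are paired correctly.
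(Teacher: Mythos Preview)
The paper states this as a \emph{fact} without proof, so there is no argument in the paper to compare against. Your approach is the standard one: express $\hat F(v)$ as a signed count of agreements versus disagreements between $f$ and the linear form $\ell_v$, rewrite this as $\dist(f,\ell_v)=2^{n-1}-\tfrac12\hat F(v)$, and then minimize over $\A_n$ by splitting on the constant term $a_0$.

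Your final observation is the important one and deserves emphasis rather than hedging. With the paper's definition of $\hat F$ (a sum over $v\in\FF^n$ only, with no parameter encoding the constant term), the quantity $2^{n-1}-\tfrac12\hat F(v)$ is exactly $\dist(f,\ell_v)$ for the \emph{linear} function $\ell_v$, and minimizing over $v$ gives the distance to linear functions, not affine ones. Your computation $\dist(f,\ell_v+1)=2^{n-1}+\tfrac12\hat F(v)$ then shows that the correct nonlinearity is
\[
\N(f)=2^{n-1}-\tfrac12\max_{v\in\FF^n}|\hat F(v)|,
\]
so the displayed fact is only correct when $\max_v\hat F(v)$ is read as $\max_v|\hat F(v)|$. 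You should state this plainly as a correction to the statement, not as a matter of interpretation: as written, the formula computes the distance to the set of linear functions, which in general strictly exceeds $\N(f)$ (take any $f$ with $\hat F(v)<0$ for the maximizing $v$, e.g.\ $f\equiv 1$).
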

\begin{definition}
  The set of integers $\{\hat{F}(v) \mid v\in\FF^n\}$ is called the \emph{Walsh spectrum} of the Boolean function $f$.
\end{definition}
It is possible 
to compute the Walsh spectrum of $f$ from its evaluation vector in $O(n2^{n})$ integer operations, while storing $O(2^n)$ integers, by means of the \emph{fast Walsh transform} (the Walsh transform is the Fourier transform of the sign function of $f$).
Thus the computation of the nonlinearity of a Boolean function $f$, when this is given either in its ANF or in its evaluation vector, requires $O(n2^n)$ integer operations and a memory of $O(2^n)$.\\
\indent
Faster methods are known in particular cases, for example when the ANF is a sparse polynomial \cite{CGC-cry-phdthesis-calik2013}, \cite{CGC-cry-art-calik2013nonlinearity}.

  \section{Preliminary results}
  \label{secPrelOnGB}
%
%
  Here we present the main results from 
\cite{CGC-cd-art-ilawcc07}, \cite{CGC-cd-inbook-D1simonetti}. The same techniques are also applied in \cite{CGC-cd-prep-elemanumax} and \cite{CGC-tesi2-guerrini}.
\subsection{Polynomials and vector weights}
Let $\KK$ be a field and $X=\{x_1,\ldots,x_s\}$ be a set of variables. We denote by $\KK[X]$ the multivariate polynomial ring in the variables X. If $f_1,\ldots,f_N \in \KK[X]$, we denote by $\langle\{f_1,\ldots,f_N\}\rangle$ the ideal in $\KK[X]$ generated by $f_1,\ldots,f_N$. \\
Let $q$ be the power of a prime.
We denote by $E_q[X]=\{x_1^q-x_1,\ldots,x_s^q-x_s\}\,,$ the set of field equations in $\FF_q[X]=\FF_q[x_1,\ldots,x_s]$, where $s\geq 1$ is an integer, understood from now on. We write $E[X]$ when $q=2$.
\begin{definition} 
Let $1\leq t \leq s$ and ${\sf m}\in \FF_q [X]$. We say that ${\sf m}$ is 
a {\bf{square free monomial}} of degree $t$ (or a {\bf{simple $t$-monomial}}) if: 
$$
{\sf m} = x_{h_1}\cdots x_{h_t}, \textrm{ where } h_1,\ldots, h_t \in \{1,\ldots,s\} \textrm{ and } h_\ell \neq h_j, \forall \ell\neq j\, ,$$ 
i.e. a monomial in $\FF_q [X]$ such that $\deg_{x_{h_i}}({\sf m})=1$ for any $1\leq i \leq t$. We 
denote by $\mathcal{M}_{s,t}$  the set of all square free monomials of degree $t$ in $\FF_q [X]$.
\end{definition}
Let $t\in\NN$, with $1\leq t\leq s$ and let $I_{s,t}\subset\FF_q[X]$  be the following ideal
$$I_{s,t}=\langle\{\sigma_t,\ldots,\sigma_s\}\cup E_q[X]\rangle\,,$$
where $\sigma_i$ are the elementary symmetric functions:
$$
\begin{array}{lcl}
\sigma_1 & = & x_1+x_2+\cdots+x_s,\\
\sigma_2 & = & x_1x_2+x_1x_3+\cdots+x_1x_s+x_2x_3+\cdots+x_{s-1}x_s,\\
 & \cdots\\
 \sigma_{s-1} & = & x_1x_2x_3\cdots x_{s-2}x_{s-1}+\cdots+x_2x_3\cdots x_{s-1}y_s,\\
 \sigma_s & = & x_1x_2\cdots x_{s-1}x_s.
\end{array}
$$
We also denote by $I_{s,s+1}$ the ideal $\langle E_q[X] \rangle$.
For any $1\leq i\leq s$, let $P_i$ be the set which contains all vectors in $(\FF_q)^n$ of weight $i$, $P_i=\{v\in\FF_q^n\mid \w(v)=i\}$, and let $Q_i$ be the set which contains all vectors of weight up to $i$, $Q_i=\sqcup_{0\leq j\leq i}P_j$ .
\begin{theorem}\label{gpeso}
Let $t$ be an integer such that $1\leq t\leq s$. Then the vanishing ideal $\mathcal{I}(Q_t)$ of $Q_{t}$ is
$$\mathcal{I}(Q_t)=I_{s,t+1}\,,$$
and its reduced \Gr\ basis $G$ is
$$
\begin{array}{lcl}
G=E_q[X]\cup\mathcal{M}_{s,t}\,, & \quad & \textrm{for } t\geq 2\,,\\
G=\{x_1,\ldots,x_s\}\,,& \quad & \textrm{for } t=1\,. 
\end{array}
$$
\end{theorem}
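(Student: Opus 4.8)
The plan is to prove the two assertions in three stages: determine the variety $\V(I_{s,t+1})$, upgrade this set-theoretic description to an equality of ideals by exploiting the field equations, and finally confirm the reduced Gr\"obner basis via Buchberger's criterion. One point worth fixing at the outset is which squarefree monomials actually enter the basis: as the variety computation shows, it is the monomials of degree $t+1$ whose common zero locus is $Q_t$, so $\mathcal{M}_{s,t+1}$ is the set to work with.

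First I would record the elementary link between the $\sigma_i$ and the Hamming weight. For $v\in\mathbb{F}_q^s$ with $\w(v)=w$, the value $\sigma_i(v)$ is a sum of products of $i$ distinct coordinates, and such a product is nonzero exactly when all $i$ chosen coordinates are nonzero. Hence $\sigma_i(v)=0$ for every $i>w$, whereas $\sigma_w(v)=\prod_{v_j\neq 0}v_j\neq 0$. Since $E_q[X]$ cuts out exactly $\mathbb{F}_q^s$, a point of $\mathbb{F}_q^s$ annihilates $\sigma_{t+1},\dots,\sigma_s$ iff $\w(v)\le t$, so $\V(I_{s,t+1})=Q_t$. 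The same bookkeeping applied to the monomials shows that all members of $\mathcal{M}_{s,t+1}$ vanish at $v$ iff no $t+1$ coordinates of $v$ are simultaneously nonzero, i.e.\ again iff $\w(v)\le t$; thus $J:=\langle E_q[X]\cup\mathcal{M}_{s,t+1}\rangle$ and $I_{s,t+1}$ have the same variety $Q_t$.

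Next I would convert equality of varieties into equality of ideals. Both ideals contain the separable polynomials $x_i^q-x_i=\prod_{a\in\mathbb{F}_q}(x_i-a)$, so by the Seidenberg-type criterion every ideal of $\mathbb{F}_q[X]$ containing $E_q[X]$ is radical and coincides with the vanishing ideal of its $\mathbb{F}_q$-rational points. Consequently $\mathcal{I}(Q_t)=I_{s,t+1}=J$. One inclusion is in fact transparent: each $\sigma_i$ with $i\ge t+1$ is a sum of degree-$i$ squarefree monomials, every one of which is divisible by some member of $\mathcal{M}_{s,t+1}$, so $I_{s,t+1}\subseteq J$. I expect the reverse inclusion to be the main obstacle if attempted by hand, since writing a single degree-$(t+1)$ squarefree monomial explicitly in terms of $\sigma_{t+1},\dots,\sigma_s$ modulo the field equations is cumbersome; the radicality argument is precisely what lets me bypass that computation.

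Finally I would check that $E_q[X]\cup\mathcal{M}_{s,t+1}$ is the reduced Gr\"obner basis of $J$ for a degree-compatible order, with $\Lt(x_i^q-x_i)=x_i^q$ and $\Lt(m)=m$ for $m\in\mathcal{M}_{s,t+1}$. The $S$-polynomial of two squarefree monomials is identically zero, since each such generator equals its own monic leading term; the $S$-polynomial of $x_i^q-x_i$ and a monomial $m$ reduces to zero, trivially when $x_i\nmid m$ (coprime leading terms) and via a one-line computation giving $S=-m$ when $x_i\mid m$; and two field equations have coprime leading terms. Buchberger's criterion then yields the Gr\"obner property, while reducedness follows because no leading term in $\{x_i^q\}\cup\mathcal{M}_{s,t+1}$ divides another and the only trailing terms $x_i$ of the field equations are divisible by none of them. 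In the boundary situation where the relevant squarefree monomials degenerate to degree one, namely the ideal of the single vector $0$ with basis $\{x_1,\dots,x_s\}$, the very same $S$-polynomial computation applies, recovering the stated special case.
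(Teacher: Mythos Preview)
The paper does not prove this theorem; it is quoted as a preliminary result from the cited references, so there is no in-paper proof to compare against. Your argument is the standard one and is correct: evaluate the elementary symmetric functions at a vector of weight $w$ to identify the variety, invoke Seidenberg's radicality lemma for ideals containing $E_q[X]$ to pass from varieties to ideals, and finish with Buchberger's criterion on the pairs (monomial,\,monomial), (monomial,\,field equation), (field equation,\,field equation). The reducedness check and the boundary case are handled correctly. One minor remark: the restriction to a degree-compatible order is unnecessary, since $\Lt(x_i^q-x_i)=x_i^q$ under every monomial order and single monomials are their own leading terms, so the set is a reduced Gr\"obner basis for \emph{any} order.

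You are also right to flag the indexing. As printed, the theorem asserts $\mathcal{I}(Q_t)=I_{s,t+1}$ (which is correct) and then says ``its'' reduced Gr\"obner basis is $E_q[X]\cup\mathcal{M}_{s,t}$; but the common zero set of $\mathcal{M}_{s,t}$ is $Q_{t-1}$, so that set is actually a Gr\"obner basis of $I_{s,t}=\mathcal{I}(Q_{t-1})$, not of $I_{s,t+1}$. Likewise the displayed special case $G=\{x_1,\dots,x_s\}$ cuts out $\{0\}=Q_0$, consistent with $I_{s,1}$ rather than with $\mathcal{I}(Q_1)$. Your fix, replacing $\mathcal{M}_{s,t}$ by $\mathcal{M}_{s,t+1}$, restores consistency with the first clause, and your proof goes through verbatim with that correction.
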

%
%
Let $\FF_q[Z]$ be a polynomial ring over $\FF_q$. Let ${\sf m}\in\mathcal{M}_{s,t}$, ${\sf m}=z_{h_1}\cdots z_{h_t}$. For any polynomial vector $W$ in the module $(\FF_q[Z])^n$, $W=(W_1,\ldots,W_n)$, we denote by ${\sf m}(W)$ the following polynomial in $\FF_q[Z]$:
$${\sf m}(W)=W_{h_1}\cdot\ldots\cdot W_{h_t}\,.$$
\begin{example}
Let $n=s=3,q=2$ and $W=(x_1x_2+x_3,x_2,x_2x_3)\in(\FF[x_1,x_2,x_3])^3$ and ${\sf m}=z_1z_3$. Then
$${\sf m}(W)=(x_1x_2+x_3)(x_2x_3)\,.$$
\end{example}

\section{Computing the nonlinearity using \Gr\ bases over $\FF$}
  \label{secNLwithGBoverF2}
  In this section we show how to use Theorem \ref{gpeso} to compute the nonlinearity of a given Boolean function $f\in \B_n$.\\
  We want to define an ideal such that a point in its variety corresponds to an affine function with distance at most $t-1$ from $f$.\\
%

Let $A$ be the variable set $A=\{a_i\}_{0\leq i\leq n}$. We denote by $\mathfrak{g}_n\in\FF[A,X]$
the following polynomial:
$$\mathfrak{g}_n=a_0+\sum_{i=1}^n a_i x_i\,
\,.$$
%
%
\noindent According to Lemma \ref{distance}, determining the nonlinearity of $f\in\B_n$ is the same as finding the minimum weight of the vectors in the set $\{\ef+\eg\mid g\in\A_n\}\subset\FF^{2^n}$.
%
We can consider the evaluation vector of the polynomial $\mathfrak{g}_n$ as follows:
$$\underline{\mathfrak{g_n}}=(\mathfrak{g}_n(A,\op_1),\ldots,\mathfrak{g}_n(A,\op_{2^n}))\in (\FF[A])^{2^n}\,.$$
\begin{example}
Let $\mathfrak{g}_3$ be a general affine function in $\mathcal{A}_3$. Then $\mathfrak{g}_3=a_1x_1+a_2x_2+a_3x_3+a_0$. We consider vectors in $\FF^3$ ordered as follows:
$$
\begin{array}{cccc}
\op_1=(0,0,0), & \op_2=(0,0,1), & \op_3=(0,1,0), & \op_4=(1,0,0),\\
\op_5=(0,1,1), & \op_6=(1,0,1), & \op_7=(1,1,0), & \op_8=(1,1,1).
\end{array}
$$
So we have that the evaluation vector of $\mathfrak{g}_3$ is:
$$\underline{\mathfrak{g}_3}=(a_0,a_0+a_1,a_0+a_2,a_0+a_3,a_0+a_1+a_2,a_0+a_1+a_3,a_0+a_2+a_3,a_0+a_1+a_2+a_3)\,.$$
\end{example}
%
%
  %
%
%
\begin{definition}\label{defIdealF2}
We denote by $J_t^n(f)$ the ideal in $\FF[A]$:
$$
\begin{array}{ll}
J_t^n(f) 
& =
\langle
\{
{\sf m} \big (\mathfrak{g}_n(A,\op_1)+ f(\op_1),\ldots,\mathfrak{g}_n(A,\op_{2^n})+f(\op_{2^n})\big)\mid {\sf m}\in\mathcal{M}_{2^n,t}
\}
\cup 
E[A]
\rangle\\
& = 
\langle\{{\sf m}(\underline{\mathfrak{g}_n}+\ef)\mid {\sf m}\in\mathcal{M}_{2^n,t}\}\cup E[A]\rangle\,.
\end{array}
$$
\end{definition} 
\begin{remark}
As $E[A]\subset J_t^n(f)$, $J_t^n(f)$ is zero-dimensional and radical (\cite{CGC-alg-art-seidenberg1}).
\end{remark}
\begin{lemma}\label{nf}
For $1\leq t\leq 2^n$ the following statements are equivalent:
\begin{enumerate}
\item $\mathcal{V}(J_t^n(f))\neq \emptyset$,
\item $\exists u\in \{\ef+\eg\mid g\in\A_n\} \textrm{ such that } \w(u)\leq t-1$,
\item $\exists \alpha\in\A_n \textrm{ such that } \dist(f,\alpha)\leq t-1$.
\end{enumerate}
\end{lemma}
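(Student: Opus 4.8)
The plan is to prove $(2)\Leftrightarrow(3)$ first, since it is immediate, and then to establish $(1)\Leftrightarrow(3)$, which carries all the content. The equivalence $(2)\Leftrightarrow(3)$ is nothing but Lemma \ref{distance}: as $g$ and $\alpha$ both range over $\A_n$, the set $\{\ef+\eg\mid g\in\A_n\}$ coincides with $\{\ef+\underline{\alpha}\mid\alpha\in\A_n\}$, and $\dist(f,\alpha)=\w(\ef+\underline{\alpha})$; hence a vector $u=\ef+\eg$ with $\w(u)\le t-1$ exists precisely when some $\alpha\in\A_n$ satisfies $\dist(f,\alpha)\le t-1$.

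For $(1)\Leftrightarrow(3)$ I would exploit the bijection between $\FF$-points $\bar a=(\bar a_0,\ldots,\bar a_n)\in\FF^{n+1}$ and affine functions $\alpha\in\A_n$ given by $\alpha(X)=\bar a_0+\sum_{i=1}^n\bar a_i x_i$. Because $E[A]\subset J_t^n(f)$, every point of $\V(J_t^n(f))$ has binary coordinates, so $\V(J_t^n(f))\subseteq\FF^{n+1}$ and the bijection applies to each of its points. The key observation is the evaluation identity $\mathfrak{g}_n(\bar a,\op_j)=\alpha(\op_j)$ for all $j$, which says that specializing the polynomial vector $\underline{\mathfrak{g}_n}+\ef$ at $\bar a$ yields exactly the numerical vector $u:=\underline{\alpha}+\ef\in\FF^{2^n}$.

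Next I would use that for ${\sf m}=z_{h_1}\cdots z_{h_t}\in\mathcal{M}_{2^n,t}$ the operation ${\sf m}(W)=W_{h_1}\cdots W_{h_t}$ is merely a product of components, and that substitution $A\mapsto\bar a$ is a ring homomorphism; therefore the generator ${\sf m}(\underline{\mathfrak{g}_n}+\ef)$, specialized at $\bar a$, equals ${\sf m}(u)=u_{h_1}\cdots u_{h_t}$. Hence $\bar a\in\V(J_t^n(f))$ if and only if every $t$-fold product of distinct coordinates of $u$ vanishes. The final, purely combinatorial step over $\FF$ is that a binary vector $u$ makes all its $t$-fold products of distinct coordinates vanish exactly when $\w(u)\le t-1$, since a nonzero such product would require $t$ coordinates simultaneously equal to $1$, which is possible iff $\w(u)\ge t$. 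This is precisely the specialization of Theorem \ref{gpeso} to the single vector $u$. Combining the steps, $\bar a\in\V(J_t^n(f))$ iff $\w(u)\le t-1$ iff $\dist(f,\alpha)\le t-1$, and passing to the existence of such a point gives $\V(J_t^n(f))\ne\emptyset$ iff some $\alpha\in\A_n$ has $\dist(f,\alpha)\le t-1$.

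I expect the main obstacle to be bookkeeping rather than a deep idea: one must check carefully that applying ${\sf m}(\cdot)$ commutes with the substitution $A\mapsto\bar a$, i.e.\ that ${\sf m}$ of the specialized numerical vector equals the specialization of the generator ${\sf m}(\underline{\mathfrak{g}_n}+\ef)$. Once this is arranged cleanly, the equivalence collapses to the elementary weight characterization above, while the hypothesis $E[A]\subset J_t^n(f)$ ensures that the points of the variety correspond to genuine affine functions and nothing spurious.
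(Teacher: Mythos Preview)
Your proposal is correct and follows essentially the same route as the paper: both arguments reduce to specializing the polynomial vector $\underline{\mathfrak{g}_n}+\ef$ at a point $\bar a\in\FF^{n+1}$, observing that the resulting numerical vector $u$ has all its $t$-fold monomials vanishing, and then invoking Theorem~\ref{gpeso} (or the equivalent elementary weight argument) to conclude $\w(u)\le t-1$. The only cosmetic difference is that the paper links (1) with (2) and you link (1) with (3), but since (2)$\Leftrightarrow$(3) is immediate this is the same proof; your version is simply more explicit about the commutation of substitution with ${\sf m}(\cdot)$ and the role of $E[A]$.
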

\begin{proof}~\\
(2)$\Leftrightarrow$(3). Obvious.\\
(1)$\Rightarrow$(2). 
Let $\bar{A}=(\bar{a}_0,\bar{a}_1,\ldots,\bar{a}_n)\in
     \mathcal{V}(J_t^n(f))\subset\FF^{n+1}$ 
     and let 
$u=(\mathfrak{g}_n(\bar{A},v_1)+f(v_1),\ldots,\mathfrak{g}_n(\bar{A},v_{2^n})+f(v_{2^n})) \in \FF^{2^n}$.
We have that ${\sf m}(u)=0$ for all ${\sf m}\in\mathcal{M}_{2^n,t}$.
So $u\in\mathcal{V}(I_{2^n,t})$ and, 
thanks to Theorem \ref{gpeso}, $u\in Q_{t-1}$, i.e. ${\rm w}(u)\leq t-1$.\\
(2)$\Rightarrow$(1). It can be proved by reversing the above argument.
\end{proof}
From Lemma \ref{nf} we immediately have the following theorem.
\begin{theorem}\label{Nf}
Let $f\in\B_n$. The nonlinearity $\N(f)$ is the minimum $t$ such that $\mathcal{V}(J_{t+1}^n(f))\neq \emptyset$.
\end{theorem}
From this theorem we can derive an algorithm to compute the nonlinearity for a function $f\in\B_n$, by computing any \Gr\ basis of $J_t^n(f)$.
\begin{algorithm}[H]
\caption{Basic algorithm to compute the nonlinearity of a Boolean function using \Gr\ basis over $\FF$}
\label{algNLoverF2}
  \begin{algorithmic}[1]
    \REQUIRE{a Boolean function $f$}
    \ENSURE{the nonlinearity of $f$}
    \STATE{$j \leftarrow 1$}
    \WHILE{${\mathcal V}(J_j^n(f)) =\emptyset $} 
      \STATE{$j \leftarrow j+1$} 
    \ENDWHILE
    \RETURN $j-1$
  \end{algorithmic}
\end{algorithm}
%
%
%
%
%
%
%
%
%
%
%
%
%
\begin{remark}
If $f$ is not affine, we can start our check from $J_2^n(f)$.
\end{remark}
\begin{example}
Let $f:\FF^3\rightarrow \FF$ be the Boolean function:
$$f(x_1,x_2,x_3)=x_1x_2+x_1x_3+x_2+1\,.$$
We want to compute $\N(f)$ and clearly $f$ is not affine. We compute vector $\ef$ and we take a general affine function $\mathfrak{g}_3$, so that:\\
$\ef=(1,1,0,1,1,0,0,0)$ ,\\
$\underline{\mathfrak{g}_3}=(a_0,a_0+a_1,a_0+a_2,a_0+a_3,a_0+a_1+a_2,a_0+a_1+a_3,a_0+a_2+a_3,a_0+a_1+a_2+a_3).$\\
So $\ef+\underline{\mathfrak{g}_3}=(a_0+1,a_0+a_1+1,a_0+a_2,a_0+a_3+1,a_0+a_1+a_2+1,a_0+a_1+a_3,a_0+a_2+a_3,a_0+a_1+a_2+a_3)=(p_1,p_2,\ldots,p_8)$ .\\
Ideal $J_2^3(f)$ is the ideal generated by
$$J_2^3(f)=\langle\{p_1p_2,p_1p_3,\ldots,p_7p_8\}\cup\{a_0^2+a_0,a_1^2+a_1,a_2^2+a_2,a_3^2+a_3\}\rangle\,.$$
We compute any \Gr\ basis of this ideal and we obtain that it is trivial, so ${\mathcal V}(J_2^3(f)) =\emptyset$ and $\N(f)>1$. Now we have to compute a \Gr\ basis for $J_3^3(f)$. We obtain, using degrevlex ordering with $a_1>a_2>a_3>a_0$, that $G(J_3^3(f))=\{ a_2+a_3+1,a_3^2+a_3,a_1a_3+a_0+1,a_0a_3+a_0+a_3+1, a_1^2+a_1, a_0a_1+a_0+a_1+1, a_0^2+a_0\}$. So, $\N(f)=2$ by Theorem \ref{Nf}. By inspecting $G(J_3^3(f))$, we also obtain all affine functions having distance 2 from $f$:
$$
\begin{array}{cccc}
\alpha_1=1+x_1+x_2, &\; \alpha_2=1+x_2, & \;\alpha_3=1+x_3, & \;\alpha_4=x_1+x_3\,.
\end{array}
$$
\end{example}
\begin{example}
Let $f:\FF^5\rightarrow \FF$ be the Boolean function
$$f=x_1x_3x_4x_5  + x_1x_2x_4 +  x_1x_4x_5 + x_2x_3x_4 + x_2x_4x_5 + x_3x_4x_5 + x_4x_5\,.$$
We have that 
$$\ef=(0,0,0,0,0,0,0,0,0,0,0,0,0,0,0,1,0,1,0,0,0,0,1,0,0,0,0,0,0,0,0,1)\,.$$ 
Then we compute $\ef+\underline{\mathfrak{g}_5}$ and we obtain:
$$
\begin{array}{rl}
\ef+\underline{\mathfrak{g}_5} = & (a_0,a_1+a_0,a_2+a_0,a_3+a_0,a_4+a_0,a_5+a_0,a_1+a_2+a_0,\\
& a_1+a_3+a_0,a_1+a_4+a_0,a_1+a_5+a_0,a_2+a_3+a_0, a_2+a_4+a_0,\\
& a_2+a_5+a_0,a_3+a_4+a_0,a_3+a_5+a_0,a_4+a_5+a_0+1,\\
& a_1+a_2+a_3+a_0, a_1+a_2+a_4+a_0+1,a_1+a_2+a_5+a_0,\\
& a_1+a_3+a_4+a_0,a_1+a_3+a_5+a_0, a_1+a_4+a_5+a_0,\\
& a_2+a_3+a_4+a_0+1,a_2+a_3+a_5+a_0,a_2+a_4+a_5+a_0,\\
& a_3+a_4+a_5+a_0,a_1+a_2+a_3+a_4+a_0,a_1+a_2+a_3+a_5+a_0,\\
& a_1+a_2+a_4+a_5+a_0,a_1+a_3+a_4+a_5+a_0,a_2+a_3+a_4+a_5+a_0,\\
& a_1+a_2+a_3+a_4+a_5+a_0+1)= (p_1,p_2,\ldots,p_{32})\,.
\end{array}
$$
As it is obvious that $f$ is not affine, we start from the ideal $J_2^5(f)$, which is generated by
$$J_2^5(f)=\langle\{p_1p_2,p_1p_3,\ldots,p_{31}p_{32}\}\cup\{a_0^2+a_0,a_1^2+a_1,a_2^2+a_2,a_3^2+a_3,a_4^2+a_4,a_5^2+a_5\}\rangle\,.$$
The \Gr\ basis of $J_2^5(f)$ with respect to any monomial order is trivial so we compute a \Gr\ basis of $J_3^5(f)$. We obtain that the \Gr\ basis of $J_t^5(f)$ is trivial with respect to any monomial order for $2\leq t\leq 4$. For $t=5$, we obtain the following \Gr\ basis with respect to the degrevlex order with $a_1>a_2>a_3>a_4>a_5>a_0$:
$$G(J_5^5(f))=\{a_0,a_5,a_4,a_3,a_2,a_1\}\,.$$
Then $\N(f)=4$, that is, there is only one affine function $\alpha$ which has distance equal to $4$ from $f$: $\alpha=0$.
\end{example}
%


\section{Computing the nonlinearity using \Gr\ bases over $\QQ$}
  \label{secNLwithGBoverQ}
%
%
Here we present an algorithm to compute the nonlinearity of a Boolean function using \Gr\ bases over $\QQ$ rather than over $\FF$, which turns out to be
much faster than Algorithm \ref{algNLoverF2}. The same algorithm can be slightly modified to work over the field $\FF_p$, where $p$ is a prime. The complexity of these algorithms will be analyzed in Section \ref{secNLComplexity}.\\
\indent
As we have seen in Section \ref{secNLwithGBoverF2}, the nonlinearity of a Boolean function can be computed using \Gr\ bases over $\FF$. It is sufficient to find the minimum $j$ such that the variety of the ideal $J_t^n(f)$ is not empty. Recall that
$$
J_t^n(f) = \langle\{{\sf m}(\underline{\mathfrak{g}_n}+\ef)\mid {\sf m}\in\mathcal{M}_{2^n,t}\}\cup E[A]\rangle\,.
$$
This method becomes impractical even for small values of $n$, since $\binom{2^n}{t}$ monomials have to be evaluated. A first slight improvement could be achieved by adding to the ideal one monomial evaluation at a time and check if 1 has appeared in the \Gr\ basis. Even this way, the algorithm remains very slow.\\
%
%
%
For each $i=1,\ldots,2^n$, let us denote:
$$f_{i}^{(\FF)}(A)=\mathfrak{g}_n(A,\op_i)+f(\op_i)$$
the Boolean function where as usual $A = \{a_0,\dots,a_n\}$ are the $n+1$ variables representing the coefficient of a generic affine function.\\
In this case we have that:
$$(f_1^{(\FF)}(A),\dots,f_{2^n}^{(\FF)}(A)) = \underline{\mathfrak{g}_n}(A)+\ef \in (\FF[A])^{2^n}$$
Note that the polynomials $f_{i}^{(\FF)}$ are affine polynomials. 
\\
We also denote by
$$f_i^{(\ZZ)}(A) = \text{NNF}(f_{i}^{(\FF)}(A))$$
the NNF of each $f_{i}^{(\FF)}(A)$ (obtained as in \cite{CGC-cry-art-carlet1999}, Theorem 1).
\begin{definition}\label{defNLP}
 We call $\nP_f(A) = f_1^{(\ZZ)}(A)+\dots+f_n^{(\ZZ)}(A) \in \ZZ[A]$ the {\bf integer nonlinearity polynomial} (or simply the \emph{nonlinearity polynomial}) of the Boolean function $f$.\\
 For any $t\in \NN$ we define the ideal $\NLI_f^t \subseteq \QQ[A]$ as follows:
 \begin{align}
  \NLI_f^t =
  \langle E[A] \bigcup \{ f_1^{(\ZZ)}+\dots+f_{2^n}^{(\ZZ)}-t \} \rangle =
  \langle E[A] \bigcup \{ \nP_f-t \} \rangle
 \end{align}
\end{definition}
 Note that the evaluation vector $\underline{\nP_f}$ represents all the distances of $f$ from all possible affine functions (in $n$ variables).
%
\begin{theorem}
The variety of the ideal $\NLI_f^t$ is non-empty if and only if
the Boolean function $f$ has distance $t$ from an affine function. 
In particular, $\N(f) = t$, where $t$ is the minimum positive integer such that $\mathcal{V}(\NLI_f^t)\ne \emptyset$.
\end{theorem}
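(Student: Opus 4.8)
The plan is to prove that the evaluation of the nonlinearity polynomial $\nP_f$ at any Boolean point $\bar{A}\in\FF^{n+1}$ computes exactly the Hamming distance between $f$ and the affine function determined by $\bar{A}$; once this is established, both assertions follow by simply reading off the variety of $\NLI_f^t$.

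First I would fix $\bar{A}=(\bar{a}_0,\ldots,\bar{a}_n)\in\FF^{n+1}$ and let $\bar\alpha\in\A_n$ be the affine function with these coefficients. For each $i$ the polynomial $f_i^{(\FF)}(A)=\mathfrak{g}_n(A,\op_i)+f(\op_i)$ is an affine Boolean function in the $A$-variables, and by construction $f_i^{(\FF)}(\bar{A})=\bar\alpha(\op_i)+f(\op_i)\in\FF$, which equals $1$ precisely when $\bar\alpha$ and $f$ differ at $\op_i$. The crucial step is the passage from $\FF$ to $\ZZ$: since $f_i^{(\ZZ)}$ is the NNF of $f_i^{(\FF)}$, the defining recovery property of the NNF (the formula $f(u)=\sum_{a\preceq u}\lambda_a$ recalled after Definition \ref{defNNF}) guarantees that evaluating $f_i^{(\ZZ)}$ at the Boolean argument $\bar{A}$ returns the integer in $\{0,1\}$ corresponding to $f_i^{(\FF)}(\bar{A})$. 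Summing over $i$ then yields
$$\nP_f(\bar{A})=\sum_{i=1}^{2^n}f_i^{(\ZZ)}(\bar{A})=\#\{\,i : \bar\alpha(\op_i)\neq f(\op_i)\,\}=\dist(f,\bar\alpha)\,,$$
where the last equality is Lemma \ref{distance}.

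Next I would identify the variety. Because $E[A]\subseteq\NLI_f^t$, every point of $\mathcal{V}(\NLI_f^t)$ has coordinates in $\{0,1\}$ and hence corresponds to an affine function $\bar\alpha$; the remaining generator $\nP_f-t$ then vanishes at such a point exactly when $\nP_f(\bar{A})=t$, i.e.\ when $\dist(f,\bar\alpha)=t$. Conversely, any affine function at distance $t$ from $f$ produces a Boolean point annihilating all the generators. This gives the equivalence: $\mathcal{V}(\NLI_f^t)\neq\emptyset$ if and only if some affine function lies at distance $t$ from $f$. Note that, since $E[A]$ forces all solutions to be $\{0,1\}$-valued, the variety is the same whether computed over $\QQ$ or over its algebraic closure, so no question of rationality of the solutions arises.

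Finally, the ``in particular'' clause is immediate: the set of distances realized between $f$ and affine functions is precisely $\{\,t : \mathcal{V}(\NLI_f^t)\neq\emptyset\,\}$, and $\N(f)$ is by definition the minimum of this set (which is positive exactly in the non-affine case addressed by the statement). I expect the only delicate point to be the careful bookkeeping of the $\FF$-to-$\ZZ$ transition, namely checking that the integer-valued NNF reproduces the Boolean value $0$ or $1$ at each Boolean argument; but this is exactly the content of the NNF recovery formula, so it presents no genuine obstacle.
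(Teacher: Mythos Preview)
Your proposal is correct and follows essentially the same approach as the paper: both arguments observe that the field equations $E[A]$ force any point of the variety to be Boolean, and that at such a point each $f_i^{(\ZZ)}$ evaluates to $0$ or $1$ according to whether $f$ and the corresponding affine function agree at $\op_i$, so that $\nP_f$ computes the Hamming distance. Your version is slightly more explicit in justifying the $\FF$-to-$\ZZ$ transition via the NNF recovery formula and in addressing the question of the ground field for the variety, but the underlying argument is identical to the paper's.
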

\begin{proof}
 Note that 
 $$\NLI_f^t = \langle E[A] \rangle + \langle \{ \nP_f(A)-t \} \rangle$$
 and so 
 $$\mathcal{V}(\NLI_f^t) = \mathcal{V}(\langle E[A]\rangle) \cap \mathcal{V}(\langle \{ \nP_f(A)-t \} \rangle)\,.$$
 Therefore $\mathcal{V}(\NLI_f^t) \ne \emptyset$ if and only if 
 $\exists \bar{a}=(\bar{a}_0, \ldots, \bar{a}_n) \in \mathcal{V}(\langle E[A]\rangle)$ such that $\nP_f(\bar{a})=t$.\\
 Let $\alpha \in \mathcal{A}_n$ such that $\alpha(X) = \bar{a}_0 + \sum_{i=1}^n \bar{a}_ix_i$.\\
 By definition we have 
 $$f_i^{(\ZZ)} = 1 \iff f(\op_i) \ne \alpha(\op_i)$$ 
 and 
 $$f_i^{(\ZZ)} = 0 \iff f(\op_i) = \alpha(\op_i)\,.$$
 Hence 
 $$\nP_f(\bar{a}) = \sum_{i=1}^{2^n}f_i^{(\ZZ)}(\bar{a})-t = 0 \iff |\{i \mid f(\op_i)\ne \alpha(\op_i) \}|=t \iff \dd(f,\alpha) = t\,.$$
 and our claim follows directly.
\end{proof}
To compute the nonlinearity of $f$ we can use Algorithm \ref{algNLoverQ} with input $f$.\\
%
\begin{algorithm}[H]
\caption{To compute the nonlinearity of the Boolean function $f$}
\label{algNLoverQ}
\begin{algorithmic}[1]
\REQUIRE{$f$}
\ENSURE{nonlinearity of $f$}
\STATE{Compute $\nP_f$}
\STATE{$j \leftarrow 1$}
\WHILE{$\V(\NLI_f^j) = \emptyset$}
  \STATE{$j \leftarrow j+1$}
\ENDWHILE
\RETURN{j} 
\end{algorithmic}
\end{algorithm}
%


\section{Computing the nonlinearity using fast polynomial evaluation}
  \label{secNLwithFPE}
%

Once the nonlinearity polynomial $\nP_f$ is defined, we can use another approach to compute the nonlinearity avoiding the computations of \Gr\ bases.\\ 
We have to find the minimum nonnegative integer $t$ in the set of the evaluations of $\nP_f$, that is, in $\{\nP_f(\bar{a}) \mid \bar{a} \in \{0,1\}^{n+1} \subset \ZZ^{n+1}\}$.\\
We write explicitly the modified algorithm.

\begin{algorithm}[H]
\caption{To compute the nonlinearity of the Boolean function $f$}
\label{algNLfromNLP}
\begin{algorithmic}[1]
\REQUIRE{$f$}
\ENSURE{nonlinearity of $f$}
\IF{$f \in \mathcal{A}_n$}
  \RETURN{$0$}
\ELSE
  \STATE{Compute $\nP_f$}
  \STATE{Compute $m = \min\{\nP_f(\bar{a}) \mid \bar{a} \in \{0,1\}^{n+1} \}$}
  \RETURN{$m$}
\ENDIF
\end{algorithmic}
\end{algorithm}

\begin{example}
 Consider the case $n=2$, $f(x_1,x_2) = x_1x_2 + 1$. We have that $\ef = (1,1,1,0)$ and $\underline{\mathfrak{g}_n}=(a_0,a_0+a_1,a_0+a_2,a_0+a_1+a_2)$.\\ 
 Let us compute all $f_i^{(\FF)}=(\underline{\mathfrak{g}_n}+\ef)_i$ and $f_i^{(\ZZ)}$,for $i=1,\ldots,2^2$:
 \begin{align*}
  f_1^{(\FF)} & = a_0 + 1          & \rightarrow f_1^{(\ZZ)} &= -a_0 + 1\\  
  f_2^{(\FF)} & = a_0 + a_1 + 1    & \rightarrow f_2^{(\ZZ)} &= 2a_0a_1 - a_0 - a_1 + 1\\  
  f_3^{(\FF)} & = a_0 + a_2 + 1    & \rightarrow f_3^{(\ZZ)} &= 2a_0a_2 - a_0 - a_2 + 1\\  
  f_4^{(\FF)} & = a_0 + a_1  + a_2 & \rightarrow f_4^{(\ZZ)} &= 4a_0a_1a_2 - 2a_0a_1 - 2a_0a_2 + a_0 - 2a_1a_2 + a_1 + a_2
 \end{align*}
 Then $\nP_f = f_1^{(\ZZ)} + f_2^{(\ZZ)} + f_3^{(\ZZ)} + f_4^{(\ZZ)} = 4a_0a_1a_2 - 2a_0 - 2a_1a_2 + 3$ and since
 $$\underline{\nP_f} = (3,1,3,1,3,1,1,3)$$
 then the nonlinearity of $f$ is $1$.\\
 Observe that the vector $\underline{\nP_f}$ represents all the distances of $f$ from all possible affine functions in $2$ variables, that is, from $0,1,x_1,x_1+1,x_2,x_2+1,x_1+x_2,x_1+x_2+1$.
\end{example}
%
  

\section{Properties of the nonlinearity polynomial}
  \label{secNLPolProp}
%
%
%
From now on, with abuse of notation, we sometimes consider $0$ and $1$ as elements of $\FF$ and other times as elements of $\ZZ$.\\
We have the following definition
\begin{definition}\label{defBinToInt}
 Given $b_1,\ldots,b_n \in \FF$
$$b_1 \oplus \ldots \oplus b_n = 
\sum_{{\bold v}=(v_1,\ldots,v_n)\in \FF^n,{\bold v} \ne {\bold 0}}
(-2)^{\w({\bold v})-1}\cdot
b_1^{v_1}\cdots b_n^{v_n}\,.$$
where the sum on the right is in $\ZZ$.
\end{definition}
It is easy to show that $b_1 \oplus \ldots \oplus b_n \in \{0,1\}$.\\
We give a theorem to compute the coefficients of the nonlinearity polynomial.
\begin{theorem}\label{thmNPCoeffFormula}
  Let 
  $v=(v_0,v_1,\ldots,v_n)\in\FF^{n+1}$,
  $\tilde{v}=(v_1,\ldots,v_n)\in\FF^{n}$, 
  $A^v=a_0^{v_0}\cdots a_n^{v_n} \in \FF[A]$ 
  and 
  $c_v \in \ZZ$ be such that
  $\nP_f = \sum_{v\in\FF^{n+1}}c_vA^v$. 
  Then the coefficients of $\nP_f$ can be computed as:
  \begin{align}  
  \label{eqCoeff0}
  c_{v} = \sum_{u\in \FF^{n}}f(u) = \w(\underline{f}) \,\,\text{ if } v = 0
  \\
  \label{eqCoeffv}
  c_v =
  (-2)^{\w(v)}
  \sum_{\substack{u\in \FF^n \\ \tilde{v}\preceq u}} 
  \left[
  f(u) - \frac{1}{2}
  \right]
  \,\,\text{ if } v \ne 0
  \end{align}
\end{theorem}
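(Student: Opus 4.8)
The plan is to compute the NNF of each affine summand $f_i^{(\FF)}$ in closed form, sum over $i$, and then read off the coefficient of each monomial $A^v$. The starting observation is that Definition \ref{defBinToInt} is nothing but the NNF of the sum-modulo-$2$ (XOR) function: for bits $b_1,\ldots,b_m\in\FF$ the integer $b_1\oplus\cdots\oplus b_m=\sum_{\emptyset\ne W}(-2)^{|W|-1}\prod_{k\in W}b_k$ is the unique polynomial agreeing with $b_1+\cdots+b_m \bmod 2$ on $\{0,1\}^m$. Since over $\FF$ we have $\mathfrak{g}_n(A,\op_i)+f(\op_i)=a_0\oplus\bigoplus_{j\in S_i}a_j\oplus f(\op_i)$, where $S_i=\supp(\op_i)\subseteq\{1,\ldots,n\}$, I can write $f_i^{(\ZZ)}$ explicitly. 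Setting $T_i=\{0\}\cup S_i$ and using $1\oplus b=1-b$ to absorb the constant bit $f(\op_i)$, this gives
$$f_i^{(\ZZ)}(A)=f(\op_i)+(-1)^{f(\op_i)}\sum_{\emptyset\ne W\subseteq T_i}(-2)^{|W|-1}\prod_{j\in W}a_j\,.$$

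Next I would sum over $i=1,\ldots,2^n$ to obtain $\nP_f$ and extract coefficients. The constant term receives no contribution from the monomial sums (each runs over $W\ne\emptyset$), so $c_0=\sum_i f(\op_i)=\w(\ef)$, which is (\ref{eqCoeff0}). For $v\ne 0$ put $W_v=\supp(v)=\{j:v_j=1\}$, so that $A^v=\prod_{j\in W_v}a_j$ and $\w(v)=|W_v|$; the monomial $A^v$ arises from index $i$ precisely when $W_v\subseteq T_i$, each such $i$ contributing $(-1)^{f(\op_i)}(-2)^{\w(v)-1}$. Hence
$$c_v=(-2)^{\w(v)-1}\sum_{i:\,W_v\subseteq T_i}(-1)^{f(\op_i)}\,.$$

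The key step is then to simplify the index condition $W_v\subseteq T_i=\{0\}\cup S_i$. Because the variable $a_0$ occurs in every $\mathfrak{g}_n(A,\op_i)$, the element $0$ lies in every $T_i$; consequently, whether or not $v_0=1$, the condition collapses to $\supp(\tilde v)\subseteq S_i$, i.e. $\tilde v\preceq\op_i$. Letting $u=\op_i$ range over $\FF^n$ yields $c_v=(-2)^{\w(v)-1}\sum_{\tilde v\preceq u}(-1)^{f(u)}$, and substituting $(-1)^{f(u)}=-2\,(f(u)-\tfrac12)$ produces $c_v=(-2)^{\w(v)}\sum_{\tilde v\preceq u}(f(u)-\tfrac12)$, which is exactly (\ref{eqCoeffv}). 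I expect the only genuinely delicate point to be this case analysis on $v_0$: one must verify that the presence of $a_0$ in the linear part of $\mathfrak{g}_n$ forces both cases $v_0=0$ and $v_0=1$ to yield the identical domination condition $\tilde v\preceq u$, everything else being bookkeeping with the $\oplus$-expansion. An alternative route would apply Proposition \ref{propNNFcoeff} directly to the integer-valued function $\bar a\mapsto\nP_f(\bar a)=\dd(f,\alpha_{\bar a})$, but the summand-by-summand computation above keeps the combinatorics more transparent.
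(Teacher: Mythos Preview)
Your argument is correct and is genuinely different from the paper's proof. The paper applies Proposition~\ref{propNNFcoeff} to each summand $\mathfrak{g}_n(A,u)\oplus f(u)$, obtaining $\lambda_v(u)=(-1)^{\w(v)}\sum_{a\preceq v}(-1)^{\w(a)}[\mathfrak{g}_n(a,u)\oplus f(u)]$, and then must work to eliminate the terms with $\tilde v\npreceq u$: it does so by a pairing argument (flip one coordinate $a_i$ where $v_i=1$, $u_i=0$) showing those terms cancel, and afterwards splits the remaining inner sum according to the parity of $\w(a)$ to evaluate $\mathfrak{g}_n(a,u)$. You instead invoke Definition~\ref{defBinToInt} directly as the closed-form NNF of the XOR, which hands you each $f_i^{(\ZZ)}$ already expanded in the monomial basis indexed by subsets of $T_i=\{0\}\cup S_i$; the domination condition $\tilde v\preceq u$ then falls out of the subset inclusion $W_v\subseteq T_i$ with no cancellation step and no parity analysis. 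Your route is shorter and makes the role of $a_0$ (why the formula depends only on $\tilde v$ and $\w(v)$) completely transparent; the paper's route has the minor advantage of not relying on Definition~\ref{defBinToInt} being recognised as the NNF of XOR, using only the general inversion formula of Proposition~\ref{propNNFcoeff}. The ``delicate point'' you flag about $v_0$ is in fact fully handled by your observation that $0\in T_i$ for every $i$, so there is no remaining gap.
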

\begin{proof}
 The nonlinearity polynomial is the integer sum of the $2^n$ numerical normal forms of the affine polynomials $\mathfrak{g}_n(A,u)\oplus f(u)\in \FF[A]$, each identified by the vector $u\in \FF^n$, i.e.:
 \begin{align*}
  \nP_f = 
  \sum_{u\in \FF^n}\text{NNF}(\mathfrak{g}_n(A,u)\oplus f(u)) = 
  \sum_{u\in \FF^n}\text{NNF}(a_0 \oplus a_1u_1 \oplus \ldots \oplus a_nu_n \oplus f(u))
 \end{align*}
 which is a polynomial in $\ZZ[A]$.\\
 The NNF of $\mathfrak{g}_n(A,u)\oplus f(u)$ is a polynomial with $2^{n+1}$ terms, i.e.:
 \begin{align*}
  \text{NNF}(\mathfrak{g}_n(A,u)\oplus f(u)) = \sum_{v \in \FF^{n+1}}\lambda_{v}A^v\,,
 \end{align*}
 for some $\lambda_v \in \ZZ$, and by Proposition \ref{propNNFcoeff}
 \begin{align*}
  \lambda_{v}(u) = (-1)^{\w(v)}\sum_{a\in \FF^{n+1} |a\preceq v}(-1)^{\w(a)}\Big(\mathfrak{g}_n(a,u)\oplus f(u)\Big)\,.
 \end{align*}
 Let us prove Equation (\ref{eqCoeff0}). When $v=(0,\ldots,0)$ we have
 \begin{align*}
  c_{(0,\ldots,0)} 
  & = 
  \sum_{u\in \FF^n} 
  \big[\mathfrak{g}_n((0,\ldots,0),u)\oplus f(u)
  \big]
  = 
  \sum_{u\in \FF^n} f(u)\,.
 \end{align*}
 Let us prove Equation (\ref{eqCoeffv}). Suppose $v\ne0$.\\
 Now the coefficient $c_v$ of the monomial $A^v$ of the nonlinearity polynomial is such that:
 \begin{align}
  c_v & = \sum_{u\in \FF^n}\lambda_{v}(u) = \notag \\
  & = \sum_{u\in \FF^n} (-1)^{\w(v)}\sum_{\substack{a\in \FF^{n+1},\\ 
  a\preceq v}}(-1)^{\w(a)}\big[\mathfrak{g}_n(a,u)\oplus f(u)\big] = \notag \\
  \label{eqCoeff_v1}
  & = (-1)^{\w(v)} \sum_{u\in \FF^n} \sum_{\substack{a\in \FF^{n+1},\\ a\preceq v}}(-1)^{\w(a)}\big[\mathfrak{g}_n(a,u)\oplus f(u)\big] \,. 
\end{align}
We prove that each $u$ such that $\tilde{v}=(v_1,\ldots,v_n)\npreceq u$ yields a zero term in the summation, as follows.\\
If $\tilde{v}\npreceq u$ then $\exists i \in \{1,\ldots,n\}$ s.t. $v_i>u_i$, i.e. $v_i = 1,u_i=0$. We claim that
$\forall a \in \FF^{n+1}$ s.t. $a \preceq v \,\,\,
 \exists \bar{a}=(\bar{a}_0,\ldots,\bar{a}_n) \in \FF^{n+1}$ s.t. $\bar{a} \preceq v$ and
\begin{align}\label{eqaabar}
 (-1)^{\w(a)}\big[\mathfrak{g}_n(a,u)\oplus f(u)\big] +
 (-1)^{\w(\bar{a})}\big[\mathfrak{g}_n(\bar{a},u)\oplus f(u)\big] = 0
\end{align}
It is sufficient to choose $\bar{a}_i \ne a_i$ and $\bar{a}_j = a_j$ for all $j \in \{1,\ldots,n\}, j\ne i$. Clearly $\bar{a}\preceq v$ and $a\preceq v$ since $v_i = 1$.\\
By direct substitution we obtain
\begin{align*}
   & (-1)^{\w(a)}\big[\mathfrak{g}_n(a,u)\oplus f(u)\big] +
 (-1)^{\w(\bar{a})}\big[\mathfrak{g}_n(\bar{a},u)\oplus f(u)\big] = \\
 = &
 (-1)^{\w(a)}\big[a_0 \oplus a_1u_1 \oplus \ldots \oplus a_iu_i \oplus \ldots \oplus a_nu_n \big] + \\
 & (-1)^{\w(a)}(-1)\big[\bar{a}_0 \oplus \bar{a}_1u_1 \oplus \ldots \oplus \bar{a}_iu_i \oplus \ldots \oplus \bar{a}_nu_n \big] \\
 = & (-1)^{\w(a)}[a_iu_i - \bar{a}_iu_i] = 0\,.
\end{align*}
Thanks to (\ref{eqaabar}) we can continue from (\ref{eqCoeff_v1}) and get
\begin{align}
  \label{eqCoeff_v2}
  c_v & = 
  (-1)^{\w(v)} 
  \sum_{\substack{u\in \FF^n \\ \tilde{v}\preceq u}} 
  \sum_{\substack{a\in \FF^{n+1},\\ a\preceq v}}(-1)^{\w(a)}\big[\mathfrak{g}_n(a,u)+ f(u)-2\mathfrak{g}_n(a,u)f(u)\big] 
  \,,
 \end{align}
 where we used $a\oplus b = a+b-2ab$.\\
 \indent
 Now we consider $v,u$ fixed, and $\tilde{v}\preceq u$. \\
 There are exactly $2^{\w(v)}$ vectors $a$ such that $a\preceq v$, i.e.: 
 \begin{align}
 |\{ a \in \FF^{n+1} \mid a\preceq v \}| = 2^{\w(v)}
 \end{align}
 Now we want to study the internal summation in (\ref{eqCoeff_v2}).\\
 If $u=(0,\ldots,0)$ then $\forall a=(a_0,\ldots,a_n) \preceq v$ we have $\mathfrak{g}_n(a,u) = a_0\oplus a_1u_1\oplus\ldots a_nu_n = a_0$.\\
 Otherwise, if $u\ne(0,\ldots,0)$ we can consider the following set of indices $U=\{j \mid u_j = 1\} = \{j_1,\ldots,j_{\w(u)}\}$, which has size $\w(u)$.\\
 Since $a \preceq v$ and $\tilde{v}\preceq u$ then $(a_1,\ldots,a_n) \preceq u$ by transitivity. For all $j \notin U$ we have $a_j = 0$, and then $\w(a_0, a_{j_1},\ldots,a_{j_{\w(u)}}) = \w(a)$.\\
 Thus, for any $u \in \FF^n$ we have 
 \begin{align}
  \mathfrak{g}_n(a,u) = a_0 \oplus a_{j_1}\oplus \ldots\oplus a_{j_{\w(u)}} = 
  \begin{cases}
   1 \mbox{ if } \w(a) \mbox{ is odd} \\ 
   0 \mbox{ if } \w(a) \mbox{ is even}
  \end{cases}
 \end{align}
 and each of the two cases happens for exactly one half of the vectors $a \preceq v$. Clearly the two halves are disjoint.\\
 This yields, from (\ref{eqCoeff_v1}) and (\ref{eqCoeff_v2}), the following chain of equalities:
 \begin{align*}
  c_v & = \sum_{u\in \FF^n}\lambda_{v}(u) = \notag 
  \\ 
  & = (-1)^{\w(v)} 
  \sum_{\substack{u\in \FF^n \\ \tilde{v}\preceq u}} 
  \bigg[ 
  \sum_{\substack{a\in \FF^{n+1},\\ 
        a\preceq v\\
        \mathfrak{g}_n(a,u)=0}}
  (-1)^{\w(a)}f(u)
  +
  \sum_{\substack{a\in \FF^{n+1},\\ 
        a\preceq v\\
        \mathfrak{g}_n(a,u)=1}}
  (-1)^{\w(a)}(1-f(u))
  \bigg]
  =
 \\
  & = (-1)^{\w(v)} 
  \sum_{\substack{u\in \FF^n \\ \tilde{v}\preceq u}} 
  \bigg[ 
  \sum_{\substack{a\in \FF^{n+1},\\ 
        a\preceq v\\
        \mathfrak{g}_n(a,u)=0}}
  f(u)
  +
  \sum_{\substack{a\in \FF^{n+1},\\ 
        a\preceq v\\
        \mathfrak{g}_n(a,u)=1}}
  (f(u)-1)
  \bigg]
  =
 \\
  & = (-1)^{\w(v)} 
  \sum_{\substack{u\in \FF^n \\ \tilde{v}\preceq u}} 
  \bigg[
  2^{\w(v)-1}f(u) +
  2^{\w(v)-1}(f(u)-1) 
  \bigg]
  =
 \\ 
   & = (-1)^{\w(v)} 
  \sum_{\substack{u\in \FF^n \\ \tilde{v}\preceq u}} 
  \bigg[
  2^{\w(v)}f(u) 
  - 2^{\w(v)-1}
  \bigg]  
  =
 \\ 
   & = (-2)^{\w(v)} 
  \sum_{\substack{u\in \FF^n \\ \tilde{v}\preceq u}} 
  \bigg[  
  f(u) - \frac{1}{2}
  \bigg]  
\end{align*} 
which proves the theorem.
\end{proof}
In particular we have:
\begin{corollary}\label{corNPHalfCoeff}
  Let $u=(u_1,\ldots,u_n)$ and $\nP_f = 
  \sum_{u\in \FF^{n}}c_{(0,u)}a_1^{u_1}\cdot\ldots\cdot a_n^{u_n} + a_0
  \sum_{u\in \FF^{n}}c_{(1,u)}a_1^{u_1}\cdot\ldots\cdot a_n^{u_n}$. Then we have that:
 \begin{align}\label{eqNPc1}
  c_{(1,0,\ldots,0)} 
  = 2^n-2\w(\underline{f})
 \end{align}
 And $\forall \tilde{v} \in \FF^{n}, \tilde{v} \ne 0$ we have:
 \begin{align}\label{eqNP1c}
  c_{(1,\tilde{v})} = -2 c_{(0,\tilde{v})},\,.
 \end{align}
\end{corollary}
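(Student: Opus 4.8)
The plan is to derive Corollary \ref{corNPHalfCoeff} directly from the coefficient formula in Theorem \ref{thmNPCoeffFormula}, specializing Equations (\ref{eqCoeff0}) and (\ref{eqCoeffv}) to the two families of monomials indexed by $(1,0,\ldots,0)$ and by $(1,\tilde v)$ with $\tilde v \neq 0$. Since the theorem already gives a closed form for every coefficient $c_v$, the corollary should follow by bookkeeping rather than by any new idea; the work is in handling the weight bookkeeping and the passage between $v$ and $\tilde v$ correctly.

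First I would prove Equation (\ref{eqNPc1}). Here $v=(1,0,\ldots,0)$, so $\w(v)=1$ and $\tilde v = 0 \in \FF^n$. The condition $\tilde v \preceq u$ is then vacuous, so the sum in (\ref{eqCoeffv}) runs over \emph{all} $u\in\FF^n$. Substituting $\w(v)=1$ gives
\begin{align*}
c_{(1,0,\ldots,0)} = (-2)\sum_{u\in\FF^n}\left[f(u)-\tfrac12\right] = -2\,\w(\underline f) + 2\cdot\tfrac12\cdot 2^n = 2^n - 2\,\w(\underline f)\,,
\end{align*}
where I used $\sum_{u\in\FF^n}f(u)=\w(\underline f)$ (exactly Equation (\ref{eqCoeff0})) and the fact that there are $2^n$ summands each contributing $-\tfrac12\cdot(-2)=1$. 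This is precisely (\ref{eqNPc1}).

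Next I would prove Equation (\ref{eqNP1c}). Fix $\tilde v\in\FF^n$ with $\tilde v\neq 0$, and compare the two monomials $A^{(0,\tilde v)}$ and $A^{(1,\tilde v)}$, whose index vectors agree in the coordinates $v_1,\ldots,v_n$ and differ only in $v_0$. The key observation is that both coefficients are computed by (\ref{eqCoeffv}) with the \emph{same} set $\tilde v$, hence with the same index set $\{u\in\FF^n \mid \tilde v\preceq u\}$ and the same inner sum $\sum_{\tilde v\preceq u}[f(u)-\tfrac12]$; the only difference is the prefactor $(-2)^{\w(v)}$. Since $\tilde v\neq 0$, setting $v_0=1$ raises the weight by exactly one, so $\w(1,\tilde v)=\w(0,\tilde v)+1$ and therefore $(-2)^{\w(1,\tilde v)} = -2\cdot(-2)^{\w(0,\tilde v)}$. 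Dividing the two instances of (\ref{eqCoeffv}) gives $c_{(1,\tilde v)} = -2\,c_{(0,\tilde v)}$, which is (\ref{eqNP1c}).

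The only point demanding care—\textbf{the main obstacle}, such as it is—is ensuring that both $(0,\tilde v)$ and $(1,\tilde v)$ fall under the $v\neq 0$ branch of the theorem so that (\ref{eqCoeffv}) applies to each, and that the restriction $\tilde v\neq 0$ is genuinely needed: if $\tilde v=0$ then $(0,\tilde v)=0$ is governed by (\ref{eqCoeff0}) rather than (\ref{eqCoeffv}), the two formulas no longer share a common prefactor structure, and the clean factor $-2$ breaks down (indeed that degenerate case is exactly what (\ref{eqNPc1}) treats separately). Once this case split is respected, the corollary is immediate.
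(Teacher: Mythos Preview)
Your proof is correct and is exactly the argument the paper has in mind: the corollary is stated immediately after Theorem~\ref{thmNPCoeffFormula} with no separate proof, precisely because it follows by specializing~(\ref{eqCoeffv}) to $v=(1,0,\ldots,0)$ and to the pair $(0,\tilde v),(1,\tilde v)$ as you do. Your remark on why $\tilde v\neq 0$ is required is also the right one.
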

Corollary \ref{corNPHalfCoeff} shows that it is sufficient to store half of the coefficients of $\nP_f$, precisely the coefficients of the monomials where $a_0$ does not appear.
%
%
%
\begin{corollary}\label{corNPCoeffSize}
 Each coefficient $c$ of the nonlinearity polynomial $\nP_f$ is such that $|c| \le 2^n$.
\end{corollary}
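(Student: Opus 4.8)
The plan is to read off both facts directly from the closed formulas for the coefficients established in Theorem \ref{thmNPCoeffFormula}, handling the constant term $c_0$ and the coefficients $c_v$ with $v\ne 0$ separately. For the constant term, Equation (\ref{eqCoeff0}) gives $c_0=\w(\underline{f})$, and since $\underline{f}\in\FF^{2^n}$ its weight satisfies $0\le\w(\underline{f})\le 2^n$; hence $|c_0|\le 2^n$ immediately.

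For $v\ne 0$ I would start from Equation (\ref{eqCoeffv}), namely $c_v=(-2)^{\w(v)}\sum_{\tilde v\preceq u}\big[f(u)-\tfrac12\big]$, and use two elementary observations. First, the number of $u\in\FF^n$ with $\tilde v\preceq u$ is exactly $2^{\,n-\w(\tilde v)}$, because the $\w(\tilde v)$ coordinates on which $\tilde v$ equals $1$ are forced to $1$ while the remaining $n-\w(\tilde v)$ coordinates range freely. Second, each summand $f(u)-\tfrac12$ equals $\pm\tfrac12$, since $f(u)\in\{0,1\}$. Combining these through the triangle inequality yields
\[
|c_v| \le 2^{\w(v)} \cdot \tfrac12 \cdot 2^{\,n-\w(\tilde v)} = 2^{\,\w(v)-\w(\tilde v)+n-1}.
\]

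The only bookkeeping requiring care is the relation between $\w(v)$ and $\w(\tilde v)$: writing $v=(v_0,\tilde v)$ we have $\w(v)=\w(\tilde v)+v_0$ with $v_0\in\{0,1\}$. If $v_0=0$ the exponent collapses to $n-1$, giving $|c_v|\le 2^{n-1}$, which is precisely the sharper estimate implicit in Corollary \ref{corNPHalfCoeff} for the coefficients in which $a_0$ does not appear; if $v_0=1$ the exponent is $n$, giving $|c_v|\le 2^n$. In every case $|c_v|\le 2^n$, which proves the claim.

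I do not expect any genuine obstacle here: the result is a direct counting estimate layered on top of Theorem \ref{thmNPCoeffFormula}. The one thing to watch is not to conflate $\w(v)$ with $\w(\tilde v)$ when the variable $a_0$ is present, since it is exactly the extra factor of $2$ contributed by $v_0=1$ that separates the bound $2^{n-1}$ for the $a_0$-free coefficients from the bound $2^n$ in general.
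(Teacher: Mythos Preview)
Your argument is correct and is exactly the natural derivation the paper intends: the corollary is stated without proof, as an immediate consequence of the coefficient formulas in Theorem~\ref{thmNPCoeffFormula}, and your counting of the $u$ with $\tilde v\preceq u$ together with the bound $|f(u)-\tfrac12|=\tfrac12$ is precisely how one reads off $|c_v|\le 2^n$ from Equation~(\ref{eqCoeffv}). The extra remark that the $a_0$-free coefficients satisfy the sharper bound $2^{n-1}$ is a nice bonus, consistent with Corollary~\ref{corNPHalfCoeff}.
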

%
\begin{corollary}\label{corNPDual}
 Given the nonlinearity polynomial of $f$ as
 $$\nP_f(a_0,\ldots,a_n)=
 c_{(0,\ldots,0)} +
 \sum_{\underset{(p_0,\ldots,p_n) \ne (0,\ldots,0)}{(p_0,\ldots,p_n)\in \FF^{n+1}}}
 c_{(p_0,\ldots,p_n)}a_0^{p_0}\cdot\ldots\cdot a_n^{p_n}
 $$
 then the nonlinearity polynomial of $f\oplus1$ is related to that of $f$ by the following rule:
 $$\nP_{f\oplus1}(a_0,\ldots,a_n)=
 2^n-c_{(0,\ldots,0)} +
 \sum_{\underset{(p_0,\ldots,p_n) \ne (0,\ldots,0)}{(p_0,\ldots,p_n)\in \FF^{n+1}}}
  -c_{(p_0,\ldots,p_n)}a_0^{p_0}\cdot\ldots\cdot a_n^{p_n}
 $$
\end{corollary}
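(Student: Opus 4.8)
The plan is to prove Corollary~\ref{corNPDual} by deriving the coefficients of $\nP_{f\oplus1}$ directly from the explicit formula in Theorem~\ref{thmNPCoeffFormula}, comparing them term-by-term with the coefficients of $\nP_f$. The key observation is that replacing $f$ by $f\oplus1$ means replacing the truth-table value $f(u)\in\{0,1\}$ by $1-f(u)$ at every point $u\in\FF^n$, since over $\FF$ we have $(f\oplus1)(u)=1+f(u)=1-f(u)$ when the values are read as integers $0,1$. So the strategy is simply to substitute $f(u)\mapsto 1-f(u)$ into Equations~(\ref{eqCoeff0}) and~(\ref{eqCoeffv}) and simplify.

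\medskip

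First I would handle the constant coefficient. By Equation~(\ref{eqCoeff0}), the constant term of $\nP_{f\oplus1}$ is
\begin{align*}
c_{(0,\ldots,0)}^{\,f\oplus1}
= \sum_{u\in\FF^n}\bigl(1-f(u)\bigr)
= 2^n - \sum_{u\in\FF^n}f(u)
= 2^n - c_{(0,\ldots,0)}\,,
\end{align*}
which matches the claimed constant term. Next I would treat a nonzero index $v\ne0$. By Equation~(\ref{eqCoeffv}), the corresponding coefficient of $\nP_{f\oplus1}$ is
\begin{align*}
c_v^{\,f\oplus1}
= (-2)^{\w(v)}\sum_{\substack{u\in\FF^n\\ \tilde{v}\preceq u}}
\Bigl[\bigl(1-f(u)\bigr)-\tfrac12\Bigr]
= (-2)^{\w(v)}\sum_{\substack{u\in\FF^n\\ \tilde{v}\preceq u}}
\Bigl[\tfrac12 - f(u)\Bigr]\,.
\end{align*}
Since $\tfrac12 - f(u) = -\bigl(f(u)-\tfrac12\bigr)$, the inner sum is exactly the negative of the one appearing in the formula for $c_v$, so $c_v^{\,f\oplus1} = -c_v$ for every $v\ne0$. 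This gives precisely the sign change on all non-constant monomials asserted in the statement.

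\medskip

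Putting these two computations together reconstitutes the claimed expression for $\nP_{f\oplus1}$. I do not anticipate any serious obstacle here: the result is an immediate corollary of the coefficient formula, and the only point requiring a moment's care is the identification $(f\oplus1)(u)=1-f(u)$ as \emph{integer} values, which is legitimate because in Theorem~\ref{thmNPCoeffFormula} the truth-table entries $f(u)$ already enter the formulas as the integers $0$ and~$1$ (consistent with the abuse of notation fixed at the start of Section~\ref{secNLPolProp}). The only verification worth stating explicitly is that the set $\{u\in\FF^n\mid \tilde v\preceq u\}$ over which the inner sum ranges depends only on $v$, not on $f$, so the index set is unchanged when passing from $f$ to $f\oplus1$ and the term-by-term comparison is valid.
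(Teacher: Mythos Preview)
Your argument is correct and is exactly the intended derivation: the paper states this result as a corollary of Theorem~\ref{thmNPCoeffFormula} without giving an explicit proof, so the direct substitution $f(u)\mapsto 1-f(u)$ into Equations~(\ref{eqCoeff0}) and~(\ref{eqCoeffv}) is precisely what is meant.
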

%
A scheme that shows how to derive the coefficients of the nonlinearity polynomial in the case $n=3$ can be seen in Tables \ref{tabNLPcoeff1} and \ref{tabNLPcoeff2}.
\begin{table}[h]
\begin{center}
\resizebox{14cm}{!} {
\begin{tabular}{ll| *8{l}|}%
 $u$& $f(u)+\mathfrak{g}_n(a_0,a_1,a_2,a_3,u)$ & $1$ & $a_3$ & $a_2$ & $a_2a_3$ & $a_1$ & $a_1a_3$ & $a_1a_2$ & $a_1a_2a_3$ \\
\hline 
000 & $v_1+a_0$             & $v_1$ &          &          &           &          &           &           &          \\
001 & $v_2+a_0+a_3$         & $v_2$ & $1-2v_2$ &          &           &          &           &           &          \\
010 & $v_2+a_0+a_2$         & $v_3$ &          & $1-2v_3$ &           &          &           &           &          \\
011 & $v_2+a_0+a_2+a_3$     & $v_4$ & $1-2v_4$ & $1-2v_4$ & $-2+4v_4$ &          &           &           &          \\
100 & $v_2+a_0+a_1$         & $v_5$ &          &          &           & $1-2v_5$ &           &           &          \\
101 & $v_2+a_0+a_1+a_3$     & $v_6$ & $1-2v_6$ &          &           & $1-2v_6$ & $-2+4v_6$ &           &          \\
110 & $v_2+a_0+a_1+a_2$     & $v_7$ &          & $1-2v_7$ &           & $1-2v_7$ &           & $-2+4v_7$ &          \\
111 & $v_2+a_0+a_1+a_2+a_3$ & $v_8$ & $1-2v_8$ & $1-2v_8$ & $-2+4v_8$ & $1-2v_8$ & $-2+4v_8$ & $-2+4v_8$ & $4-8v_8$ \\
\end{tabular}
}
\end{center}
\caption{Computation of the coefficients of the nonlinearity polynomial with $n=3$. Each line represents the NNF coefficients of the terms of $f(u)+\mathfrak{g}_n(A,u)$ \emph{not} containing $a_0$.}
\label{tabNLPcoeff1}
\end{table}
\begin{table}[h]
\begin{center}
\resizebox{14cm}{!} {
\begin{tabular}{ll| *8{l}|}%
 $u$& $f(u)+\mathfrak{g}_n(a_0,a_1,a_2,a_3,u)$ & $a_0$ & $a_0a_3$ & $a_0a_2$ & $a_0a_2a_3$ & $a_0a_1$ & $a_0a_1a_3$ & $a_0a_1a_2$ & $a_0a_1a_2a_3$ \\
\hline 
000 & $v_1+a_0$             & $1-2v_1$ &           &           &          &           &          &          &            \\
001 & $v_2+a_0+a_3$         & $1-2v_2$ & $-2+4v_2$ &           &          &           &          &          &            \\
010 & $v_2+a_0+a_2$         & $1-2v_3$ &           & $-2+4v_3$ &          &           &          &          &            \\
011 & $v_2+a_0+a_2+a_3$     & $1-2v_4$ & $-2+4v_4$ & $-2+4v_4$ & $4-8v_4$ &           &          &          &            \\
100 & $v_2+a_0+a_1$         & $1-2v_5$ &           &           &          & $-2+4v_5$ &          &          &            \\
101 & $v_2+a_0+a_1+a_3$     & $1-2v_6$ & $-2+4v_6$ &           &          & $-2+4v_6$ & $4-8v_6$ &          &            \\
110 & $v_2+a_0+a_1+a_2$     & $1-2v_7$ &           & $-2+4v_7$ &          & $-2+4v_7$ &          & $4-8v_7$ &            \\
111 & $v_2+a_0+a_1+a_2+a_3$ & $1-2v_8$ & $-2+4v_8$ & $-2+4v_8$ & $4-8v_8$ & $-2+4v_8$ & $4-8v_8$ & $4-8v_8$ & $-8+16v_8$ \\
\end{tabular}
}
\end{center}
\caption{Computation of the coefficients of the nonlinearity polynomial with $n=3$. Each line represents the NNF coefficients of the terms of $f(u)+\mathfrak{g}_n(A,u)$ containing $a_0$.}
\label{tabNLPcoeff2}
\end{table}
%


\section{Complexity of constructing the nonlinearity polynomial}
  \label{secNLPolComplex}
%
%
We write the algorithm (Algorithm \ref{algNLPcoeff}) to calculate the nonlinearity polynomial in $O(n2^n)$ integer operations.
\begin{algorithm}[H]
\caption{Algorithm to calculate the nonlinearity polynomial  $\nP_f$ in $O(n2^n)$ integter operations.}
\label{algNLPcoeff}
  \begin{algorithmic}[1]
    \REQUIRE{The evaluation vector $\underline{f}$ of a Boolean function $f(x_1,\ldots,x_n)$}
    \ENSURE{the vector $c=(c_1,\ldots,c_{2^{n+1}})$ of the coefficients of $\nP_f$} \\
    Calculation of the coefficients of the monomials not containing $a_0$
    \STATE{$(c_1,\ldots,c_{2^n}) = \underline{f}$}
    \FOR{$i=0,\ldots,n-1$}
      \STATE{$b\leftarrow 0$}
      \REPEAT
        \FOR{$x = b,\ldots,b+2^i-1$}
        \STATE{$c_{x+1} \leftarrow c_{x+1} + c_{x+2^i+1}$}\label{stepSUM} 
        \IF{$x = b$}
          \STATE{$c_{x+2^i+1} \leftarrow 2^i -2c_{x+2^i+1}$}\label{stepG1} 
        \ELSE
          \STATE{$c_{x+2^i+1} \leftarrow -2c_{x+2^i+1}$}\label{stepG2} 
        \ENDIF
        \ENDFOR
        \STATE{$b \leftarrow b+2^{i+1}$} 
      \UNTIL{$b=2^n$}
    \ENDFOR \\
  Calculation of the coefficients of the monomials containing $a_0$
  \STATE{$c_{1+2^n} \leftarrow 2^n -2c_{1}$}
  \FOR{$i = 2,\ldots,2^n$}
    \STATE{$c_{i+2^n} \leftarrow-2c_{i}$}
  \ENDFOR
    \RETURN $c$
  \end{algorithmic}
\end{algorithm}
In Figure \ref{figNLPn3} Algorithm \ref{algNLPcoeff} is shown for $n=3$.
\begin{figure}[h]
\scalebox{.6}{
 \xymatrix@=10pt{
  (x_1,x_2,x_3) & f(x_1,x_2,x_3)           & &   & Step\;1     &                         & &   & Step\;2                 &                           & &   & Step\;3                                          & \\  
  000        & e_{1}\ar[rr]                & & + & e_{1}+e_{2} & \ar[rr]                 & & + & e_{1}+e_{2}+e_{3}+e_{4} & \ar[rr]                   & & + & e_{1}+e_{2}+e_{3}+e_{4}+e_{5}+e_{6}+e_{7}+e_{8}  & \\
  001        & e_{2}\ar[urr]\ar[rr]_{1-2x} & &   & 1-2e_{2}    & \ar[rr]                 & & + & 2-2e_{2}-2e{4}          & \ar[rr]                   & & + & 4-2e_{2}-2e_{4}-2e_{6}-2e_{8}                    & \\
  010        & e_{3}\ar[rr]                & & + & e_{3}+e_{4} & \ar[uurr]\ar[rr]_{2-2x} & &   & 2-2e_{3}-2e{4}          & \ar[rr]                   & & + & 4-2e_{3}-2e_{4}-2e_{7}-2e_{8}                    & \\
  011        & e_{4}\ar[urr]\ar[rr]_{1-2x} & &   & 1-2e_{4}    & \ar[uurr]\ar[rr]_{-2x}  & &   & -2+4e_{4}               & \ar[rr]                   & & + & -4+4e_{4}-4e_{8}                                 & \\
  100        & e_{5}\ar[rr]                & & + & e_{5}+e_{6} & \ar[rr]                 & & + & e_{5}+e_{6}+e_{7}+e_{8} & \ar[uuuurr]\ar[rr]_{4-2x} & &   & 4-2e_{5}-2e_{6}-2e_{7}-2e_{8}                    & \\
  101        & e_{6}\ar[urr]\ar[rr]_{1-2x} & &   & 1-2e_{6}    & \ar[rr]                 & & + & 2-2e_{6}-2e{8}          & \ar[uuuurr]\ar[rr]_{-2x}  & &   & -4+4e_{6}-4e_{8}                                 & \\
  110        & e_{7}\ar[rr]                & & + & e_{7}+e_{8} & \ar[uurr]\ar[rr]_{2-2x} & &   & 2-2e_{7}-2e{8}          & \ar[uuuurr]\ar[rr]_{-2x}  & &   & -4+4e_{7}-4e_{8}                                 & \\
  111        & e_{8}\ar[urr]\ar[rr]_{1-2x} & &   & 1-2e_{8}    & \ar[uurr]\ar[rr]_{-2x}  & &   & -2+4e_{8}               & \ar[uuuurr]\ar[rr]_{-2x}  & &   & 4-8e_{8}                                         & \\
 }
}
\caption{Butterfly scheme to obtain a fast computation of the nonlinearity polynomial coefficients, where $(e_1,\ldots,e_8)=(f(\op_1),\ldots,f(\op_8))$.}
\label{figNLPn3}
\end{figure}
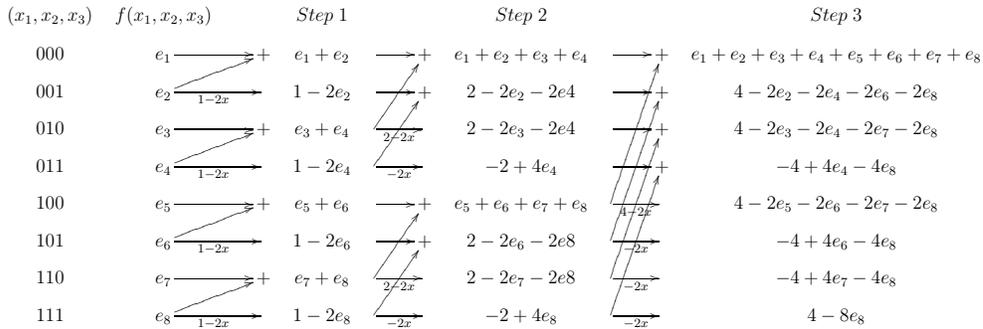
\begin{theorem}\label{thmNLPn2n}
 Algorithm \ref{algNLPcoeff} requires:
 \begin{enumerate}
  \item $O(n2^n)$ integer sums and doublings, in particular circa $n2^{n-1}$ integer sums and circa $n2^{n-1}$ integer doublings.
  \item the storage of $O(2^n)$ integers of size less than or equal to $2^n$.
 \end{enumerate}
\end{theorem}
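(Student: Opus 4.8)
The plan is to treat the two halves of Algorithm \ref{algNLPcoeff} separately---first the computation of the coefficients of the monomials not containing $a_0$ (the nested loops around lines \ref{stepSUM}--\ref{stepG2}), then the short final loop that produces the coefficients containing $a_0$---and, in each half, to count the integer sums and the doublings and to bound the size of the stored values. Claims (1) and (2) then follow by collecting these contributions.

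\emph{Counting the operations.} For a fixed outer index $i\in\{0,\ldots,n-1\}$, the \textbf{repeat} loop advances $b$ in steps of $2^{i+1}$ from $0$ up to $2^n$, so it executes $2^{n-i-1}$ times, and each of its passes runs the inner \textbf{for} loop $2^i$ times; hence the body on lines \ref{stepSUM}--\ref{stepG2} is executed exactly $2^{n-i-1}\cdot 2^i = 2^{n-1}$ times for each $i$, and $n\,2^{n-1}$ times in total. Each execution performs exactly one integer sum on line \ref{stepSUM} and exactly one doubling (either line \ref{stepG1} or line \ref{stepG2}), which already yields circa $n\,2^{n-1}$ sums and circa $n\,2^{n-1}$ doublings. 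The only additional operations are the additions of the constant $2^i$ on line \ref{stepG1}, occurring once per pass of the \textbf{repeat} loop, i.e. $\sum_{i=0}^{n-1}2^{n-i-1}=2^n-1$ times, together with the $2^n$ doublings and the single sum of the final loop. Collecting everything gives $n\,2^{n-1}+2^n$ sums and $n\,2^{n-1}+2^n$ doublings, both $O(n2^n)$ with leading term $n\,2^{n-1}$, which is claim (1).

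\emph{Bounding the storage.} The array $c$ has $2^{n+1}=O(2^n)$ entries, which settles the count in claim (2). For the size bound I would invoke Corollary \ref{corNPCoeffSize}, so that every \emph{final} coefficient satisfies $|c|\le 2^n$; more precisely, the formulas of Theorem \ref{thmNPCoeffFormula} give $c_{(0,0)}=\w(\ef)\le 2^n$ while $|c_{(0,\tilde v)}|\le 2^{n-1}$ for $\tilde v\ne 0$, so the final loop, which computes $2^n-2c_1$ and $-2c_i$ for $i\ge 2$, also stays within $2^n$. It then remains to check that no intermediate value overflows this bound, which I would establish by the invariant that after step $i$ every entry of $c$ has absolute value at most $2^{i+1}$: a sum on line \ref{stepSUM} at most doubles the current bound, a doubling on line \ref{stepG2} does the same, and for line \ref{stepG1} one uses that its operand is a \emph{nonnegative} partial sum of $e$-values bounded by $2^i$, so that $2^i-2c\in[-2^i,2^i]$.

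The routine part is the operation count, which is direct geometric-sum bookkeeping. The one point requiring care---and the main obstacle---is the intermediate-size bound: line \ref{stepG1} is the only place where the addition of a constant and a doubling are applied to a value of comparable magnitude, and without the observation that its operand is a nonnegative partial sum $\le 2^i$ one would obtain only the weaker bound $3\cdot 2^i$. Keeping track of which positions hold nonnegative partial sums and which hold the already doubled, sign-alternating coefficients is therefore the crux of making the invariant go through.
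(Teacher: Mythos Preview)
Your proposal is correct and follows essentially the same approach as the paper: you both count the inner body as executing $2^{n-i-1}\cdot 2^i=2^{n-1}$ times per outer step (giving $n2^{n-1}$ sums and $n2^{n-1}$ doublings), then add the cost of the final loop, and you both invoke Corollary~\ref{corNPCoeffSize} for the size of the stored coefficients. The one place you go further than the paper is in bounding the \emph{intermediate} values via your invariant $|c|\le 2^{i+1}$ after step $i$; the paper's proof only appeals to Corollary~\ref{corNPCoeffSize} for the final coefficients and does not discuss intermediate overflow at all, so your treatment of claim (2) is in fact more complete.
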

\begin{proof}
 In the first part of Algorithm \ref{algNLPcoeff} (the computation of the coefficients of the monomials not containing $a_0$) the iteration on $i$ is repeated $n$ times.\\
 For each $i$, Step \ref{stepSUM} and Step \ref{stepG1} or \ref{stepG2} are repeated $2^i\frac{2^n}{2^{i+1}}=2^{n/2}$ times (since $b$ goes from $0$ to $2^n$ by a step of $2^{i+1}$ and $x$ performs $2^i$ steps). In Step \ref{stepSUM} only one integer sum is performed, in Steps \ref{stepG1} we have one integer sum and one doubling, and in Step \ref{stepG2} only one doubling. Then the total amount of integer operation is 
 $$O(n2^n)$$
 Finally the computation of the coefficients of the monomials containing $a_0$ requires only $2^n$ integer doublings.\\
 To store all the monomials of the nonlinearity polynomial we have to store $2^{n+1}$ integers, although Corollary \ref{corNPHalfCoeff} shows that it is sufficient to store only the first half of them, i.e. $2^n$ integers. By Corollary \ref{corNPCoeffSize}, their size is less than or equal to $2^n$.
\end{proof}
%
  

\section{Complexity considerations}
  \label{secNLComplexity}
%
First we recall that the complexity of computing the nonlinearity of a Boolean function with $n$ variables, having as input its coefficients vector, is $O(n2^n)$ using the Fast M\"obius and the Fast Walsh Transform.\\
We now want to analyze the complexity of Algorithm \ref{algNLoverF2}, \ref{algNLoverQ}, \ref{algNLfromNLP}.
%
%
%
\subsection{Some considerations on Algorithm \ref{algNLoverF2}}
In Algorithm \ref{algNLoverF2}, almost all the computations are wasted evaluating all possible simple-$t$-monomials in $2^n$ variables, which are $\binom{2^n}{t}$. This number grows enormously even for small values of $n$ and $t$. We investigated experimentally how many of the $\binom{2^n}{t}$ monomials are actually needed to compute the final \Gr\ basis of $J_t^n$. Our experiment ran over all possible Boolean functions in 3 and 4 variables. The results are reported in Tables \ref{tabNumberOfMonomialsJt3},
\ref{tabNumberOfMonomialsJt4t1_3} and \ref{tabNumberOfMonomialsJt4t4_7}.\\
In this tables, for each $J_t^n$ there are four columns. Let $G_t^n$ be the \Gr\ basis of $J_t^n$. \\
Under the column labeled $\#$C we report the average number of \emph{checked} monomials in $2^n$ variables before obtaining $G_t^n$. \\
Under the column labeled $\#$S we report the average number of monomials which are actually \emph{sufficient} to obtain $G_t^n$. \\
Under the columns labeled ``m'' e ``M'' we report, respectively, the minimum and the maximum number of sufficient monomials to find $G_t^n$ running through all possible Boolean functions in $n$ variables.
\\
For example, to compute the \Gr\ basis of the ideal $J_2^3$ associated to a Boolean function $f$ whose nonlinearity is $2$, we needed to check on average 24 monomials before finding the correct basis. Between the $24$ monomials only $9.7$ (on average) were sufficient to obtain the same basis, where the number of sufficient monomials never exceeded the range $8-11$.
\begin{table}[h]
\begin{center}
\resizebox{14cm}{!} {
\begin{tabular}{c|*3{*4{c}|}}
    & \multicolumn{4}{c}{$J_1^3$} &   \multicolumn{4}{c}{$J_2^3$} &   \multicolumn{4}{c}{$J_3^3$} \\
 NL & $\#$S & m & M & $\#$C &     $\#$S & m & M & $\#$C &     $\#$S & m & M & $\#$C \\
\hline
  0 &   4   & 4 & 4 & 8    & 0   & 0 &  0 & 0     & 0   & 0 & 0  & 0  \\
  1 &   4.5 & 4 & 5 & 4.4  & 8.5 & 7 & 10 & 28    & 0   & 0 & 0  & 0  \\
  2 &   4.4 & 4 & 5 & 4    & 9.7 & 8 & 11 & 24    & 9.3 & 8 & 11 & 56
%
\end{tabular}
}
\end{center}
\caption{Number of monomials needed to compute the \Gr\ basis of the ideal $J_t^3$.}
\label{tabNumberOfMonomialsJt3}
\end{table}
\begin{table}[h]
\begin{center}
\resizebox{14cm}{!} {
\begin{tabular}{c|*3{*4{l}|}}
 & \multicolumn{4}{c}{$J_1^4$} &   \multicolumn{4}{c}{$J_2^4$} &   \multicolumn{4}{c}{$J_3^4$} \\
NL & $\#$S & m & M & $\#$C &     $\#$S & m & M & $\#$C &     $\#$S & m & M & $\#$C \\
\hline
  0 &   5    & 5 & 5 & 16     &  0    & 0 & 0  & 0       & 0 & 0 & 0 & 0              \\  
  1 &   5.25 & 4 & 6 & 8      &  8.75 & 8 & 11 & 120     & 0 & 0 & 0 & 0              \\
  2 &   4.83 & 4 & 6 & 5.67   &  9.97 & 8 & 12 & 62.83   & 14.50 & 12 & 18 & 560      \\
  3 &   4.62 & 4 & 6 & 4.76   &  9.92 & 8 & 12 & 42.72   & 15.76 & 13 & 19 & 315.04   \\
  4 &   4.53 & 4 & 6 & 4.42   &  9.83 & 8 & 12 & 37.49   & 15.81 & 13 & 19 & 246.19   \\
  5 &   4.46 & 4 & 5 & 4.19   & 10.11 & 8 & 12 & 34.39   & 15.89 & 13 & 19 & 215.68   \\
  6 &   4.43 & 4 & 5 & 4.00   &  9.71 & 8 & 11 & 24.00   & 17.29 & 16 & 19 & 156.86   
\end{tabular}
}
\end{center}
\caption{Number of monomials needed to compute the \Gr\ basis of the ideal $J_t^4$, $t=1,2,3$.}
\label{tabNumberOfMonomialsJt4t1_3}
\end{table}
\begin{table}[h]
\begin{center}
\resizebox{14cm}{!} {
\begin{tabular}{c|*4{*4{l}|}}
 & \multicolumn{4}{c}{$J_4^4$} &   \multicolumn{4}{c}{$J_5^4$} &   \multicolumn{4}{c}{$J_6^4$} &   \multicolumn{4}{c}{$J_7^4$}    \\
NL & $\#$S & m & M & $\#$C &     $\#$S & m & M & $\#$C &     $\#$S & m & M & $\#$C &     $\#$S & m & M & $\#$C \\
\hline
  0   & 0 & 0 & 0 & 0               & 0 & 0 & 0 & 0               & 0 & 0 & 0 & 0               & 0 & 0 & 0 & 0  \\  
  1   & 0 & 0 & 0 & 0               & 0 & 0 & 0 & 0               & 0 & 0 & 0 & 0               & 0 & 0 & 0 & 0  \\
  2   & 0 & 0 & 0 & 0               & 0 & 0 & 0 & 0               & 0 & 0 & 0 & 0               & 0 & 0 & 0 & 0  \\
  3   & 20.18 & 15 & 23 & 1820      & 0 & 0 & 0 & 0               & 0 & 0 & 0 & 0               & 0 & 0 & 0 & 0  \\
  4   & 21.44 & 16 & 24 & 1319.96   & 23.99 & 22 & 29 & 4368      & 0 & 0 & 0 & 0               & 0 & 0 & 0 & 0  \\
  5   & 21.54 & 19 & 24 & 1003.15   & 26.00 & 24 & 28 & 3851.24   & 23.50 & 22 & 25 & 8008      & 0 & 0 & 0 & 0  \\
  6   & 19.57 & 19 & 20 &  671.71   & 28    & 28 & 28 & 2603.79   & 28 & 28 & 28 & 7608.79      & 16 & 16 & 16 & 11441
\end{tabular}
}
\end{center}
\caption{Number of monomials needed to compute the \Gr\ basis of the ideal $J_t^4$,$t=4,5,6,7$.}
\label{tabNumberOfMonomialsJt4t4_7}
\end{table}
\subsection{Algorithm \ref{algNLoverF2} and \ref{algNLoverQ}}
Since it is not easy to estimate the complexity of a \Gr\ basis computation theoretically, we give some experimental results, shown in Table \ref{tabNLtimings}. 
\begin{table}[h]
\begin{center}
\resizebox{14cm}{!} {
\begin{tabular}{c|cccccc}
$n$   & $\log_2\big[\frac{(n+1)2^{n+1}}{n2^{n}}\big]$ & FWT & NLP+FPE & GB on $\FF_p$ & GB on $\QQ$ & GB on $\FF$\\
\hline
2-3   & 1.53 &   -  &   -  & 1.45 & 1.86 & 2.50 \\
3-4   & 1.31 &   -  &   -  & 1.88 & 2.27 & 7.51 \\ 
4-5   & 1.22 & 0.90 & 1.02 & 2.33 & 2.91 &  -   \\
5-6   & 1.17 & 0.98 & 1.09 & 2.64 & 3.23 &  -   \\
6-7   & 1.14 & 1.01 & 1.13 & 2.76 & 4.29 &  -   \\
7-8   & 1.12 & 1.22 & 1.07 & 3.24 &  -   &  -   \\
8-9   & 1.11 & 0.95 & 1.17 & 3.48 &  -   &  -   \\
9-10  & 1.09 & 1.25 & 1.07 &  -   &  -   &  -   \\
10-11 & 1.09 & 1.07 & 1.11 &  -   &  -   &  -
\end{tabular}
}
\end{center}
\caption{Experimental comparisons of the coefficients of growth of the analyzed algorithms.}
\label{tabNLtimings}
\end{table}
In this table we report the coefficients of growth of the analyzed algorithms
\footnote{To compute the values in the columns FWT and NLP+FPE we tested $15000$ random Boolean functions from $n=4$, since for $n=3$ there are only $2^{(2^3)}=256$ Boolean functions.}
, comparing them with the value $\log_2\big[\frac{(n+1)2^{n+1}}{n2^{n}}\big]$. For each algorithm we compute the average time $t_n$ to compute the nonlinearity of a Boolean function with $n$ variables and the average time $t_{n+1}$ to compute the nonlinearity of a Boolean function with $n+1$ variables. Then we report in the table the value $\log_2\big(\frac{t_{n+1}}{t_{n}}\big)$. When \Gr\ bases are computed, then graded reverse lexicographical order is used, with Magma \cite{CGC-MAGMA} implementation of the Faugère $F4$ algorithm. 
Since the ideal $J_t^n(f)$ of Definition \ref{defIdealF2} is derived from the evaluation of $\binom{2^n}{t}$ monomials (generating at most the same number of equations), then the complexity of Algorithm \ref{algNLoverF2} is equivalent to the complexity of computing a \Gr\ basis of at most $\binom{2^n}{t}$ equations of degree $d$ (where $1 < d \le t$) in $n+1$ variables over the field $\FF$. This method becomes almost impractical for $n=5$.
We recall that $t\le 2^{n-1}-2^{\frac{n}{2}-1}$ (see Equation \ref{eqMaxNL}).\\
\indent
The complexity of Algorithm \ref{algNLoverQ} is equivalent to the complexity of computing a \Gr\ basis of only  $n+1$ field equations plus one single polynomial $\nP_f$ of degree at most $n+1$ in $n+1$ variables over the field $\QQ$ (or over a prime field $\FF_p$) with coefficients of size less then or equal to $2^n$. \\
As shown in Table \ref{tabNLtimings}, computing this \Gr\ basis over a prime field $\FF_p$ with $p\sim 2^n$ is much faster than computing the same base over $\QQ$. It may be investigated if there are better size for the prime $p$.\\
%
%
\subsection{Algorithm \ref{algNLfromNLP}}
\begin{theorem}
 Algorithm \ref{algNLfromNLP} returns the nonlinearity of a Boolean function $f$ with $n$ variables in
 $$O(n2^{n})$$
 integers operations (sums and doublings).
\end{theorem}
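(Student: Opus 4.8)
The plan is to treat this as a pure accounting statement: correctness is already in hand, since the earlier theorem on $\NLI_f^t$ shows that $\nP_f(\bar a)$ equals the distance $\dist(f,\alpha)$ for the affine function $\alpha$ with coefficient vector $\bar a$, so the evaluation vector $\underline{\nP_f}$ lists exactly the distances of $f$ from all affine functions and $\min \underline{\nP_f}=\N(f)$. What remains is to bound the cost of each line of Algorithm \ref{algNLfromNLP} and check that every contribution stays within $O(n2^n)$ integer operations.

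First I would dispose of the affine test ``$f\in\A_n$''. Computing the ANF of $f$ from its evaluation vector $\ef$ is done by the fast M\"obius transform in $O(n2^n)$ bit sums, after which $f$ is affine if and only if no ANF coefficient attached to a monomial of degree $\ge 2$ is nonzero, a scan costing $O(2^n)$. Hence this branch is within budget (and it is in fact redundant for correctness, since when $f$ is affine the value $0$ already belongs to the evaluation set of $\nP_f$). Next, computing the coefficient vector of $\nP_f$ is precisely the task of Algorithm \ref{algNLPcoeff}, so by Theorem \ref{thmNLPn2n} it costs circa $n2^{n-1}$ integer sums and $n2^{n-1}$ integer doublings, i.e. $O(n2^n)$ operations, while storing $O(2^n)$ integers of absolute value at most $2^n$.

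The crux is the line computing $m=\min\{\nP_f(\bar a)\mid \bar a\in\{0,1\}^{n+1}\}$. A direct evaluation of a polynomial carrying $2^{n+1}$ monomials at each of the $2^{n+1}$ points would cost $\Theta(4^n)$, far above the claimed bound, so the point is to evaluate all $2^{n+1}$ values simultaneously by a fast transform. The key observation I would isolate is that $\nP_f$ is \emph{multilinear}: by its definition it is the integer sum of the $2^n$ numerical normal forms $f_i^{(\ZZ)}$, and each NNF is multilinear in $a_0,\ldots,a_n$, so their sum is too. Consequently $\nP_f$ is itself an NNF, and the whole evaluation vector $\underline{\nP_f}$ over $\{0,1\}^{n+1}$ is recovered from the coefficient vector $c$ by the inverse transform $\nP_f(u)=\sum_{a\preceq u}c_a$. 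Exactly as for the (inverse) NNF/M\"obius transform recalled in Section \ref{secPrelOnBF}, this is realized by a divide-and-conquer butterfly on $n+1$ layers, each touching $2^{n+1}$ entries, hence in $O\bigl((n+1)2^{n+1}\bigr)=O(n2^n)$ integer additions. A final pass over the $2^{n+1}$ resulting integers picks out the minimum in $O(2^n)$ comparisons, which is a lower-order term.

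Adding the three contributions yields $O(n2^n)$, proving the claim. The hard part will be the evaluation step: everything hinges on recognizing the multilinearity of $\nP_f$ so that a single fast transform replaces the quadratic term-by-term evaluation; once that is in place, the remaining estimates (the affineness test, the invocation of Theorem \ref{thmNLPn2n}, and the final minimization) are routine bookkeeping.
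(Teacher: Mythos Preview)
Your proposal is correct and follows essentially the same three-step accounting as the paper: compute $\nP_f$ via Theorem \ref{thmNLPn2n}, evaluate it at all binary points by the fast M\"obius transform, and take the minimum. In fact you supply more justification than the paper does, explicitly handling the affineness test and spelling out why the fast transform applies (multilinearity of $\nP_f$ as a sum of NNFs), whereas the paper's proof simply asserts that the evaluation step ``can be performed using fast M\"obius transform in $O(n2^n)$ integer sums''.
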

\begin{proof}
 Algorithm \ref{algNLfromNLP} can be divided in three main steps:
 \begin{enumerate}
  \item Calculation of the nonlinearity polynomial $\nP_f$. This step, as shown in Theorem \ref{thmNLPn2n}, requires $O(n2^n)$ integer operations and $O(2^n)$ memory.
  \item Evaluation of the nonlinearity polynomial $\nP_f$. This step can be performed using fast M\"obius transform in $O(n2^n)$ integer sums and $O(2^n)$ memory.
  \item Computation of the minimum $\nP_f(a)$ with $a\in \ZZ^{n+1}$. This step requires no more than $O(2^n)$ checks.
 \end{enumerate}
 The overall complexity is then $O(n2^n)$ integer operations and $O(2^n)$ memory.
\end{proof}


\section{Conclusions}
  \label{secConcl}

We presented an approach to compute the nonlinearity of a Boolean
function using multivariate polynomials. 
In particular we show that the problem of computing the distance of a generic Boolean function $f$ from the set of affine functions is equivalent to the problem of solving a multivariate polynomial system  over the binary field. 
This system can be reformulated over the rationals by considering the associated pseudo Boolean function, and we can exhibit a multivariate polynomial whose evaluations solve the problem. Moreover, we evaluate our polynomial using fast Fourier techniques and solve the problem very efficienlty. In particular, with our polynomial-based approach we compute the nonlinearity of any Boolean function in $O(n2^n)$  operations, reaching the same complexity of classical methods.
  

\section{Acknowledgments}
  \label{secAck}
 \noindent
The first two authors would like to thank the third author (their supervisor).


\bibliography{RefsCGC}




\end{document}